\newtheorem{definition}{Definition}
\newcommand{\ignore}[1]{}
\newcommand{\Bi}{\ensuremath{\mathbf{2}}}
\newcommand{\Tri}{\ensuremath{\mathbf{3}}}
\newcommand{\Quad}{\ensuremath{\mathbf{4}}}
\newcommand{\ident}[1]{\mbox{\textit{#1}}}
\newcommand{\pset}[1]{{\cal P}(#1)}
\newcommand{\fun}{\rightarrow}
\newcommand{\impl}{\mathbin{\Rightarrow}}
\newcommand{\biim}{\mathbin{\Leftrightarrow}}
\newcommand{\nat}{\mathbb{N}}
\newcommand{\Her}{\mathcal{G}}
\newcommand{\Interp}{\mathcal{I}}
\newcommand{\lfp}{\mathbf{lfp}}
\newcommand{\chk}{\checkmark}  
\newcommand{\sset}[2]{\left\{~#1 \left|
      \begin{array}{l}#2\end{array}
    \right.     \right\}}
\newcommand{\tuple}[1]{\ensuremath{\langle #1 \rangle}}
\def\defemb#1#2{\expandafter\def\csname #1\endcsname
  {\relax\ifmmode #2\else\hbox{$#2$}\fi}}
\newcommand{\ri}{<\!\!\!<}
\begin{document}

\title[Truth versus Information]%
      {Truth versus Information \\ in Logic Programming}

\author[L. Naish and H. S{\o}ndergaard]{
Lee Naish and Harald S{\o}ndergaard\\
Department of Computing and Information Systems \\
The University of Melbourne, Victoria 3010,
Australia\\
\email{\{lee,harald\}@unimelb.edu.au}}

\submitted{15 June 2012}
\revised{6 February 2013}
\accepted{20 April 2013}

\maketitle

\newpsobject{showgrid}{psgrid}{subgriddiv=1,griddots=10,gridlabels=6pt}

\begin{abstract}

The semantics of logic programs was originally described in terms
of two-valued logic.
Soon, however, it was realised that three-valued logic had some
natural advantages, as it provides distinct values not
only for truth and falsehood, but also for ``undefined''.
The three-valued semantics proposed by Fitting and by Kunen are
closely related to what is computed by a logic program, the third truth
value being associated with non-termination.  
A different three-valued semantics, proposed by Naish, shared much 
with those of Fitting and Kunen but incorporated allowances for
programmer intent, the third truth value being associated with 
underspecification.
Naish used an (apparently) novel ``arrow'' operator to relate the intended
meaning of left and right sides of predicate definitions.  In this paper
we suggest that the additional truth values of Fitting/Kunen and 
Naish are best viewed as duals.  
We use Belnap's four-valued logic, also used elsewhere by Fitting, to unify
the two three-valued approaches.
The truth values are arranged in a bilattice which
supports the classical ordering on truth values as well as the
``information ordering''.
We note that the ``arrow'' operator of Naish
(and our four-valued extension) is essentially the information ordering,
whereas the classical arrow denotes the truth ordering.
This allows us to shed new light on many aspects of logic programming,
including program analysis, type and mode systems, declarative debugging and
the relationships between specifications and
programs, and successive executions states of a program.
This paper is to appear
in Theory and Practice of Logic Programming (TPLP).

\end{abstract}

\begin{keywords}
Declarative debugging,
information order,
intended interpretation,
logic program specification,
many-valued logic,
modes,
program analysis,
specification semantics.
\end{keywords}

\section{Introduction}

Logic programming is an important paradigm.  Computers can be seen as
machines which manipulate meaningful symbols and the branch of mathematics
which is most aligned with manipulating meaningful symbols is logic.
This paper is part of a long line of research on what are good choices of
logic to use with a ``pure'' subset of the Prolog programming language.
We ignore the ``non-logical'' aspects of Prolog such as cut and built-ins
which can produce side-effects, and assume a sound form of negation
(ensuring in some way that negated literals are always ground before
being called).

There are several ways in which having a well-defined semantics for
programs is helpful.  First, it can be helpful for implementing a
language (writing a compiler, for example)---it forms a specification
for answering ``what should this program compute''.  Second, it can be
helpful for writing program analysis and transformation tools.  Third, it
can be helpful for verification and debugging---it can allow application
programmers to answer ``does this program compute what I intend'' and,
when the answer is negative, ``why not''.  There is typically imprecision
involved in all three cases.
\begin{enumerate}
\item 
Many languages allow some latitude to the implementor in ways
that affect observable behaviour of the program, for example by not
specifying the order of sub-expression evaluation (C is an example).
Even in pure Prolog, typical
approaches to semantics do not precisely deal with infinite loops
and/or ``floundering'' (when a negative literal never becomes ground).
Such imprecision is not necessarily a good thing, but there is often a
trade-off between precision and simplicity of the semantics.  
\item 
Program analysis tools must provide imprecise information in general if
they are guaranteed to terminate, since the properties they seek to
establish are almost always undecidable.  
\item
Programmers are often only interested
in how their code behaves for some class of inputs. For other inputs
they either do not know or do not care 
(this is in addition to the first point).  
Moreover, it is often convenient for programmers to reason about 
\emph{partial} correctness, setting aside the issue of termination.
\end{enumerate}
A primary aim of this paper is to reconcile two different uses of
many-valued logic for understanding logic programs.
The first use is for the provision of semantic definition,
with the purpose of answering ``what should this program compute?''
The other use is in connection with program specification and
debugging, concerned with answering ``does this program compute what
I intend'' and similar questions involving programmer intent.

A second aim is to show the versatility of four-valued logic
in a logic programming context.
Four-valued logic has been recommended by
\citeN{fitting:JLP_1991},
but primarily as a programming language feature, for
distributed programming.
In that context,
the fourth truth value represents conflicting information
derived from different nodes in a network.
We complement that work by pointing out that motivation for
four-valued logic comes from many other sources, 
even when we restrict attention to sequential programming.
Central to our use of this logic is its support for the ``information
ordering'' as well as the classical ordering on truth values.
Our contributions are:
\begin{itemize}
\item 
We show how Belnap's four-valued logic enables a clean
distinction between a formula/query which is undefined,
or non-denoting, and one which is irrelevant, or inadmissible.
\item 
We use this logic to provide a denotational semantics for 
logic programs which is designed to help a programmer reason 
about partial correctness in a natural way.
This aim is different to the semanticist's traditional objective of 
reflecting runtime behaviour, or aligning denotational
and operational semantics.
\item 
The approximative nature of logic program analysis
naturally fits with the information ordering and we show how
semantic approximation can be expressed in terms of
four truth values.
\item 
We show how four-valued logic helps modelling the concept of 
modes in a moded logic programming language such as Mercury.
\item 
We argue that a four-valued semantics and the information ordering
clarify the relation between programs and formal specifications.
\item
We show how established practice in declarative debugging can be
extended with four values.
\item
Finally, we argue that the computation model of logic programming can
be viewed from the perspective of the information order rather than the
classical truth order.
\end{itemize}
This paper is an extended version of \citeN{Nai-Son-Hor:CATS12}.
We assume the reader has a basic understanding of pure logic
programs, including programs in which clause bodies use negation,
and their semantics.
We also assume the reader has some familiarity with the concepts of
types and modes as they are used in logic programming.

The paper is structured as follows.  
We set the scene in Section~\ref{sec-scene} by revisiting the problems
that surround approaches to logical semantics for pure Prolog.
In Section~\ref{sec-interpretations}
we introduce the three- and four-valued logics and many-valued 
interpretations that the rest of the paper builds upon.
In Section~\ref{sec-lp-semantics} we provide some background on 
different approaches to the semantics of pure Prolog, focusing on
work by Fitting and Kunen.
In Section~\ref{sec-intentional} we review Naish's approach to
what we call specification semantics.
In Section \ref{sec-generalised} we present a new four-valued approach
which combines two three-valued approaches (Fitting and Naish).
Section~\ref{sec-model-intersect} establishes a
property of this semantics analogous to model intersection.
Section~\ref{sec-proganalysis} shows how four-valued logic 
naturally captures the kind of approximation employed in program
analysis.
Section~\ref{sec-modes} shows how it also helps with modelling the 
concept of \emph{modes} in a moded logic language such as Mercury.
Section~\ref{sec-specification} discusses its relevance for
formal specification
and Section~\ref{sec-debugging} sketches its application to 
declarative debugging.
Section~\ref{sec-computation-information} shows how the logic programming
computation model can be seen in terms of the information ordering.
Section~\ref{sec-related} discusses some additional related work
and Section~\ref{sec-conclusion} concludes.

\section{Logic programs}
\label{sec-scene}

Suppose we need Prolog predicates to capture the workings of
classical propositional disjunction and negation.
We may specify the behaviour exhaustively (we use \texttt{neg} for
negation since \texttt{not} is often used as a general negation primitive in
Prolog):
\pagebreak
\begin{verbatim}
     or(t, t, t).                               neg(t, f).
     or(t, f, t).                               neg(f, t).
     or(f, t, t).
     or(f, f, f).
\end{verbatim}
yielding simple, correct predicates.
If we also need a predicate for implication, we could define
\begin{verbatim}
     implies(X, Y) :- neg(X, U), or(U, Y, t).
\end{verbatim}
Clauses are universally closed.
Stated differently,
the variables in the head of a clause are universally quantified over the
whole clause; those which only occur in the body are existentially
quantified within the body.

Although Prolog programs explicitly define only what is true, it is also
important that they implicitly define what is false.  This is the case
for most programs and is essential when negation is used.  For example,
\texttt{neg(t,t)} would be considered false and for it to succeed would
be an error.  Because (implicit) falsehood depends on the set of all
clauses defining a predicate, it is often convenient to group all
clauses into a single definition with distinct variables in the
arguments of the clause head.  This can be done in Prolog by using the
equality (\texttt{=}) and disjunction (\texttt{;}) primitives.  For
example, \texttt{neg} could be defined 
\begin{verbatim}
     neg(X, Y) :- (X=t, Y=f ; X=f, Y=t).
\end{verbatim}
\citeN{Cla78}
defined the \emph{completion} of a logic program which explicitly
groups clauses together in this way; others 
\cite{fitting:JLP_1991,sem3neg} assume
the program contains a single clause per predicate from the outset.
Henceforth we assume the same.  
The \verb@:-@ in a single-clause definition thus tells us about 
both the truth and falsehood of instances of the head.
Exactly how \verb@:-@ is best viewed has been the topic of much debate
and is a central focus of this paper.  One issue is the relationship
between the truth values of the head and body---what set of truth
values do we use, what constitutes a model or a fixed point, etc.
Another is whether we consider one particular model/fixed point (such
as the least one according to some ordering) as the semantics or do we
consider any one of them to be a possible semantics or consider the set
of all models/fixed points as the semantics.

Let us fix our vocabulary for logic programs and lay down
an abstract syntactic form.
\begin{definition}[Syntax] \rm
An atom (or atomic formula) is of the form
$p(t_1,\ldots,t_n)$, where $p$ is a predicate symbol (of arity $n$)
and $t_1,\ldots,t_n$ are terms.
If $A = p(t_1,\ldots,t_n)$ then $A$'s predicate symbol $\ident{pred}(A)$
is $p$.
There is a distinguished equality predicate $=$ with arity 2, written
using infix notation.
A \emph{literal} is an \emph{atom} $A$ or the negation of an atom,
written $\lnot A$.
A \emph{conjunction} $C$ is a conjunction of literals.
A \emph{disjunction} $D$ is of the form $C_1 \lor \cdots \lor C_k$, 
$k>0$, where each $C_i$ is a conjunction.
For a syntactic object $o$ (literal, clause, disjunction, and so on),
we use $\ident{vars}(o)$ to denote the set of variables that occur
in $o$.

A \emph{predicate definition} is a pair $(H, \exists W[D])$ 
where $H$ is an atom in most general form $p(V_1,\ldots,V_n)$ 
(that is, the $V_i$ are distinct variables), 
$D$ is a disjunction, 
and $W = \ident{vars}(D) \setminus \ident{vars}(H)$.
We call $H$ the \emph{head} of the definition and $\exists W[D]$
its \emph{body}.
The variables $\ident{vars}(H)$ are the \emph{head variables} 
and the variables $W$ are \emph{local variables}.
Finally, a \emph{program} is a finite set $S$ of predicate definitions 
such that if $(H_1,B_1) \in S$ and $(H_2,B_2) \in S$ then 
$\ident{pred}(H_1) \not= \ident{pred}(H_2)$.
\end{definition}

In program text we use Prolog notation and assume this is
converted to the abstract syntax described above by combining
clauses and mapping ``\verb@,@'', ``\verb@;@'' and ``\verb@not@'' to
$\land$, $\lor$ and $\lnot$, respectively, etc.  For example, the
definition of \verb@implies/3@ above is shorthand for
$(implies(X, Y), \exists U [ neg(X, U) \land or(U, Y, t)])$.

We let $\Her$ denote the set of ground (that is, variable-free)
atoms (for some suitably large fixed alphabet).

\begin{definition}[Head instance, head grounding] \rm
\label{def-headinstance}
A \emph{head instance} of a predicate definition $(H,\exists W[D])$
is an instance where all head variables have been replaced by other
terms, and all local variables remain unchanged.
A \emph{head grounding} is a head instance where the head is ground.
\end{definition}

For example, $(implies(t, f), \exists U [ neg(t, U) \land or(U, f, t)])$
is a head grounding of the \verb@implies/3@ definition.
Later we shall define models and  ``immediate consequence'' 
functions for two-, three-, and four-valued semantics.
The use of \emph{head groundings}, rather than more conventional 
approaches is a technical convenience which allows us to emphasize 
the relationship between models and immediate consequence.

\section{Interpretations and models}
\label{sec-interpretations}

In two-valued logic, an interpretation is a mapping from $\Her$
to $\Bi = \{\mathbf{f},\mathbf{t}\}$.
To give meaning to recursively defined predicates, the usual approach is
to impose some structure on $\Her \fun \Bi$, to ensure that we
are dealing with a lattice, or a semi-lattice at least.
Given the traditional ``closed-world'' assumption (that a formula is
false unless it can be proven true), the natural ordering on \Bi\ is this:
$b_1 \leq b_2$ iff $b_1 = \mathbf{f} \lor b_2 = \mathbf{t}$.
The ordering on interpretations is the natural (pointwise)
extension of $\leq$,
equipped with which $\Her \fun \Bi$ is a complete lattice.

Three-valued logic is arguably a more natural logic for the partial
predicates that emerge from pure Prolog programs, and more
generally, for the partial functions that emerge from programming
in any Turing complete language.
The case for three-valued logic as the appropriate logic for
computation has been made repeatedly, starting
with \citeN{Kleene38} and pursued by the VDM school
(for example
\citeN{barringer-cheng-cbjones:1984}, \citeN{jones_middelburg}),
and others.
The third value, $\mathbf{u}$, for ``undefined'', finds natural uses,
for example as the value of \verb!p(b)!, given the program in
Figure~\ref{eg-clark}.
\begin{figure}
\begin{verbatim}
                              p(a).
                              p(b) :- p(b).
                              p(c) :- not p(c).
                              p(d) :- not p(a).
\end{verbatim}
\caption{Small program to exemplify semantics}
\label{eg-clark}
\end{figure}

\begin{figure*}[t]
\begin{center}
\begin{pspicture}(-16.3,-5.4)(-3,1)  
\psset{nodesep=3pt}

\rput[c](-14,-0.5){\rnode[c]{F}{\textbf{f}}}
\rput[c](-14,0.5){\rnode[c]{T}{\textbf{t}}}

\ncline{F}{T}

\rput[c](-14,-1.1){(a) Classical order \Bi}

\rput[c](-10,-0.5){\rnode[c]{U3}{\textbf{u}}}
\rput[c](-11,0.5){\rnode[c]{F3}{\textbf{f}}}
\rput[c](-9,0.5){\rnode[c]{T3}{\textbf{t}}}

\ncline{U3}{F3}
\ncline{U3}{T3}

\rput[c](-10,-1.1){(b) Kleene's order \Tri}

\rput[c](-6,-0.5){\rnode[c]{IN}{\textbf{i}}}
\rput[c](-7,0.5){\rnode[c]{FN}{\textbf{f}}}
\rput[c](-5,0.5){\rnode[c]{TN}{\textbf{t}}}

\ncline{IN}{FN}
\ncline{IN}{TN}

\rput[c](-6,-1.1){(c) Naish's order}

\rput[c](-9,-4){\rnode[c]{U4}{\textbf{u}}}
\rput[c](-10,-3){\rnode[c]{F4}{\textbf{f}}}
\rput[c](-8,-3){\rnode[c]{T4}{\textbf{t}}}
\rput[c](-9,-2){\rnode[c]{I4}{\textbf{i}}}

\ncline{U4}{F4}
\ncline{U4}{T4}
\ncline{F4}{I4}
\ncline{T4}{I4}

\rput[c](-11.5,-2.8){information}
\rput[c](-11.4,-3.2){ordering $\sqsubseteq$}
\psline[arrows=->](-10.4,-4.2)(-10.4,-1.8)
\rput[c](-9,-4.8){truth ordering $\leq$}
\psline[arrows=->](-10.2,-4.4)(-7.3,-4.4)

\rput[c](-9,-5.4){(d) interlaced bilattice \Quad}

\end{pspicture}
\end{center}
\caption{Partially ordered sets of truth values\label{fig-bilattice}}
\end{figure*}

With three- or four-valued logic, an interpretation becomes a mapping
from $\Her$ to $\Tri = \{\mathbf{u},\mathbf{f},\mathbf{t}\}$ or to
$\Quad = \{\mathbf{u},\mathbf{f},\mathbf{t},\mathbf{i}\}$
(we discuss the role of the fourth value $\mathbf{i}$ shortly).
For compatibility with the way equality is treated in Prolog, we
constrain interpretations so $x = y$ is mapped to \textbf{t} if $x$ and
$y$ are identical (ground) terms, and to \textbf{f} otherwise. 
This is irrespective of the set of truth values used.
There are different choices for the semantics of the connectives.
In Section~\ref{sec-four} we discuss connectives and give particular
truth tables for the common
connectives, corresponding to Belnap's four-valued 
logic~\cite{Belnap_4val_1977}
(the restriction to three-valued logic that results from deleting rows
and columns containing $\mathbf{i}$ corresponds to
Kleene's (strong) three-valued logic $K_3$ \cite{Kleene38}).

We denote the ordering depicted in Figure~\ref{fig-bilattice}(b) by
$\sqsubseteq$,\footnote{While (b) and (c) are structurally identical, 
\textbf{u} and \textbf{i} carry different meanings, as discussed later.} 
that is, $b_1 \sqsubseteq b_2$ iff
$b_1 = \mathbf{u} \lor b_1 = b_2$, and we overload this symbol
to also denote the ordering in Figure~\ref{fig-bilattice}(d)
(that is, $b_1 \sqsubseteq b_2$ iff
$b_1 = \mathbf{u} \lor b_1 = b_2 \lor b_2 = \mathbf{i}$), as well
of the natural extensions to $\Her \fun \Tri$ or
$\Her \fun \Quad$.
We shall also use $\sqsupseteq$, the inverse of $\sqsubseteq$.
In some contexts we disambiguate the symbol by using a superscript:
$\sqsupseteq^\Tri$ or $\sqsupseteq^\Quad$.  Similarly, we use
$\ge^\Bi$ for the truth ordering with two values, and $=^\Bi$, $=^\Tri$ and $=^\Quad$
for equality of truth values in the different domains.
When the context allows, we write the partially ordered set 
$(\Bi,\leq)$ simply as $\Bi$, $(\Tri,\sqsubseteq^{\Tri})$ as $\Tri$,
and $(\Quad,\sqsubseteq^{\Quad})$ as $\Quad$.

The structure in Figure~\ref{fig-bilattice}(d) is the simplest of
Ginsberg's so-called bilattices \cite{ginsberg:1988}.
The diamond shape can be considered a lattice from two distinct angles.
The ordering $\leq$ is the ``truth'' ordering, whereas $\sqsubseteq$
is the ``information'' ordering.
For the truth ordering we denote the meet and join operations by
$\land$ and $\lor$, respectively.
For the information ordering we denote the meet and join operations by
$\sqcap$ and $\sqcup$, respectively.
Thinking of the four elements as \emph{sets} of classical values,
with $\mathbf{u} = \emptyset$, $\mathbf{i} = \{\mathbf{f},\mathbf{t}\}$,
and $\mathbf{f}$ and $\mathbf{t}$ being singleton sets,
the information ordering is simply the subset ordering.
Regarding the truth ordering, note that
$b_1 \leq b_2$ holds if and only if $b_2$ is at least as true
as $b_1$, and at the same time no more false.
That is, we can move up in the truth value ordering by adding
truth, or removing falsehood, or both.
The bilattice in Figure~\ref{fig-bilattice}(d) is interlaced: 
Each meet and each join operation is monotone with respect to 
\emph{either} ordering.
The bilattice is also distributive in the strong sense that each meet and
each join operation distributes over all the other meet and join 
operations.

An equivalent view of three- or four-valued interpretations is to
consider an interpretation to be a pair of ground atom sets.
That is, the set of interpretations
$\Interp = \pset{\Her} \times \pset{\Her}$.
In this view an interpretation $I = (T_I,F_I)$ is a set $T_I$ of
ground atoms deemed true together with a set $F_I$ of ground atoms 
deemed false.
A ground atom $A$ that appears in neither is deemed undefined.
Such a truth value \emph{gap} may arise from the absence of any
evidence that $A$ should be true, or that $A$ should be false.
In a four-valued setting, para-consistency is a possibility:
A ground atom $A$ may belong to $T_I \cap F_I$.
Such a truth value \emph{glut} may arise from the presence of
conflicting evidence regarding $A$'s truth value.

The concept of a \emph{model} is central to many approaches to logic
programming.   A model is an interpretation which satisfies a particular
relationship between the truth values of the head and body of each head
grounding.  We now define how truth for atoms is lifted to truth for
bodies of definitions.

\begin{definition}[Made true] \rm
Let $I = (T_I, F_I)$ be an interpretation.
Recall that a ground equality atom is in $T_I$ or $F_I$, depending on
whether its arguments are one and the same term.
\\

\noindent
For a ground atom $A$,
\vspace*{-1ex}
\[
\begin{array}{l}
   \mbox{$I$ makes $A$ true iff $A \in T_I$}
\\ \mbox{$I$ makes $A$ false iff $A \in F_I$}
\end{array}
\]
For a ground negated atom $\neg A$,
\vspace*{-1ex}
\[
\begin{array}{l}
   \mbox{$I$ makes $\neg A$ true iff $A \in F_I$}
\\ \mbox{$I$ makes $\neg A$ false iff $A \in T_I$}
\end{array}
\]
For a ground conjunction $C = L_1 \land \cdots \land L_n$,
\vspace*{-1ex}
\[
\begin{array}{l}
   \mbox{$I$ makes $C$ true iff $\forall i \in \{1,\ldots n\}\ I$ makes $L_i$ true}
\\ \mbox{$I$ makes $C$ false iff $\exists i \in \{1,\ldots n\}\ I$ makes $L_i$ false}
\end{array}
\]
For a ground disjunction $D = C_1 \lor \cdots \lor C_n$,
\vspace*{-1ex}
\[
\begin{array}{l}
\mbox{$I$ makes $D$ true iff $\exists i \in \{1,\ldots n\}\ I$ makes $C_i$ true}
\\ \mbox{$I$ makes $D$ false iff $\forall i \in \{1,\ldots n\}\ I$ makes $C_i$ false}
\end{array}
\]
For the existential closure of a disjunction $\exists W [D]$,
\vspace*{-1ex}
\[
\begin{array}{l}
\mbox{$I$ makes $\exists W [D]$ true iff} \\
  \qquad\qquad \mbox{$I$ makes some ground instance of $D$ true} \\
\mbox{$I$ makes $\exists W [D]$ false iff} \\
  \qquad\qquad \mbox{$I$ makes all ground instances of $D$ false} \\
\end{array}
\]
\end{definition}
We use this to extend interpretations naturally so they map $\Her$ and
existential closures of disjunctions to $\Bi$, $\Tri$ or $\Quad$.
We freely switch between viewing an interpretation as a mapping and
as a pair of sets.
Thus, for any formula $F$,
\[
I(F) = 
  \left\{
  \begin{array}{ll}
     \mathbf{u} & \mbox{if $I$ neither makes $F$ true nor false}
  \\ \mathbf{f} & \mbox{if $I$ makes $F$ false and not true}
  \\ \mathbf{t} & \mbox{if $I$ makes $F$ true and not false}
  \\ \mathbf{i} & \mbox{if $I$ makes $F$ true and also false}
  \end{array}
  \right.
\]

\begin{definition}[$\mathcal{R}^\mathcal{D}$-Model] \rm

Let $\mathcal{D}$ be $\Bi$, $\Tri$ or $\Quad$ and $\mathcal{R}^\mathcal{D}$
be a binary relation on $\mathcal{D}$.
An interpretation $I$ is an $\mathcal{R}^\mathcal{D}$-model
of predicate definition $(H,B)$ if, for each head 
grounding $(H\theta,B\theta)$, we have
$\mathcal{R}^\mathcal{D}(I(H\theta),I(B\theta))$.
$I$ is an $\mathcal{R}^\mathcal{D}$-model of program $P$ if it is an
$\mathcal{R}^\mathcal{D}$-model of every predicate definition in $P$.
\end{definition}
For example, a $=^\Bi$-model is a two-valued interpretation where the
head and body of each head grounding have the same truth value.

Another important concept used in logic programming semantics and analysis
is the ``immediate consequence operator''.  The original version, $T_P$,
took a set of true atoms (representing a two-valued interpretation)
and returned the set of atoms which could be proved from those atoms
by using some clause of program $P$ for a single deduction 
step.\footnote{The original version, due to 
\citeN{vanEm-Kow:JACM76}, used `$T$', but $T_P$ has become standard.
Various definitions which generalise $T_P$
to $\Tri$ and $\Quad$ have been given \cite{apt94logic}.}
Here we give an equivalent definition based on how we define 
interpretations.
We write $\Phi_P$ for the immediate consequence operator, following
\citeN{Fitting85}.
Note, however, that we give $\Phi_P$ a definition in terms of
\emph{head groundings} (Definition~\ref{def-headinstance}),
and the same definition is used for the two-, three-, and
four-valued cases alike.

\begin{definition}[$\Phi_P$] \rm
Given an interpretation $I$ and program $P$, $\Phi_P(I)$ is the
interpretation $I'$ such that the truth value of an atom $H$ in $I'$ is
the truth value of $B$ in $I$, where $(H,B)$ is a head grounding of a
definition in $P$.
\end{definition}

\begin{proposition} \rm
Let $\mathcal{D}$ be $\Bi$, $\Tri$ or $\Quad$.
A ($\mathcal{D}$-) interpretation $I$ is a fixed point of $\Phi_P$ 
iff $I$ is a $=^{\mathcal{D}}$-model of $P$.
\end{proposition}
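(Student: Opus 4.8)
The plan is to prove both implications by directly unfolding the definitions of $\Phi_P$ and of a $=^{\mathcal{D}}$-model, showing that each amounts, atom by atom, to the single equation $I(H\theta) =^{\mathcal{D}} I(B\theta)$ ranging over all head groundings $(H\theta,B\theta)$ of $P$. The one preliminary I would establish is that $\Phi_P$ is well defined on the relevant atoms: for every ground atom $A = p(t_1,\ldots,t_n)$ whose predicate $p$ is defined in $P$ there is exactly one head grounding with head $A$. This is because there is a unique definition $(H,B) \in P$ with $\ident{pred}(H) = p$, and since $H$ has the most general form $p(V_1,\ldots,V_n)$ with the $V_i$ distinct, the substitution $\theta = \{V_1 \mapsto t_1,\ldots,V_n \mapsto t_n\}$ is the unique one with $H\theta = A$, giving the head grounding $(A, B\theta)$. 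For an equality atom, or an atom whose predicate is not defined in $P$, no head grounding has that atom as its head; $\Phi_P(I)$ then agrees with $I$ there, so such atoms satisfy the fixed-point equation automatically and are likewise not constrained by the $=^{\mathcal{D}}$-model condition. Both properties in the proposition are therefore statements about exactly the same family of atoms.

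For the forward direction, suppose $\Phi_P(I) = I$, and fix a definition $(H,B) \in P$ together with a head grounding $(H\theta,B\theta)$. By the definition of $\Phi_P$ the value of $H\theta$ in $\Phi_P(I)$ is $I(B\theta)$, and since $\Phi_P(I) = I$ this equals $I(H\theta)$; thus $I(H\theta) =^{\mathcal{D}} I(B\theta)$. As the definition and the head grounding were arbitrary, $I$ is a $=^{\mathcal{D}}$-model of every definition of $P$, hence of $P$.

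For the converse, suppose $I$ is a $=^{\mathcal{D}}$-model of $P$; I must show $\Phi_P(I)$ and $I$ agree on every ground atom. By the preliminary the only atoms to check are those $A = p(t_1,\ldots,t_n)$ with $p$ defined in $P$; for such $A$ take the unique $(H,B) \in P$ and $\theta$ with $H\theta = A$. Then $\Phi_P(I)(A) = I(B\theta)$ by the definition of $\Phi_P$, while $I(A) = I(H\theta) =^{\mathcal{D}} I(B\theta)$ since $I$ is a $=^{\mathcal{D}}$-model of $(H,B)$. Hence $\Phi_P(I)(A) = I(A)$, so $\Phi_P(I) = I$.

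The argument is uniform in $\mathcal{D} \in \{\Bi,\Tri,\Quad\}$: nothing uses properties of the truth-value set beyond the fact --- guaranteed by the discussion preceding the definition of $\Phi_P$ --- that the lifted interpretation assigns each body a well-defined value in $\mathcal{D}$. I do not expect a genuine obstacle here; the only step needing care is the well-definedness of $\Phi_P$ just discussed, which rests on heads being in most general form and on each predicate having at most one definition in $P$.
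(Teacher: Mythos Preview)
Your proof is correct and follows essentially the same approach as the paper's own proof: both directions are obtained by directly unfolding the definitions of $\Phi_P$ and of a $=^{\mathcal{D}}$-model, reducing each to the family of equations $I(H\theta) =^{\mathcal{D}} I(B\theta)$ over all head groundings. Your version is more careful than the paper's in making the well-definedness of $\Phi_P$ explicit (uniqueness of the head grounding for each defined ground atom, and the treatment of equality and undefined predicates), but this is elaboration rather than a different method.
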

\begin{proof}
This follows easily from the given definitions.
Assume $\Phi_P(I) = I$.
Then, by definition of $\Phi_P(I)$, for each head grounding $(H,B)$
of some predicate definition in $P$, $I(H) =^{\mathcal{D}} I(B)$.
That is, $I$ is a $=^{\mathcal{D}}$-model of $P$.
Conversely, assume $I$ is a $=^{\mathcal{D}}$-model of each predicate
definition $(H,B)$.
That is, $H\theta =^{\mathcal{D}} B\theta$ for all $\theta$.
Then, by definition of $\Phi_P$, $\Phi_P(I) = I$.
\end{proof}

\section{Logic program operational semantics}
\label{sec-lp-semantics}

We first discuss some basic notions and how Clark's two-valued approach
to logic program semantics fits with what we have presented so far.
Then we discuss the Fitting/Kunen three-valued approach and Fitting's
four-valued semantics.

\subsection{Two-valued semantics}
\label{sec-clark}

There are three aspects to the semantics of logic programs: proof
theory, model theory and fixed point theory (see \citeN{Llo84},
for example).  The proof theory is generally based on resolution,
often some variant of SLDNF resolution \cite{Cla78}.  This gives a
top-down operational semantics, which is not our main focus but is
briefly discussed in Section~\ref{sec-computation-information}.
The model theory gives a declarative view of programs and is particularly
useful for high level reasoning about partial correctness.  The fixed
point semantics, based on $\Phi_P$ or $T_P$, gives an alternative 
``bottom up'' operational semantics (which has been used in deductive
databases) and which is also particularly useful for program analysis.

The simplest semantics for pure Prolog disallows negation and treats a
Prolog program as a set of definite clauses.  Prolog's \texttt{:-} is
treated as classical implication, $\leftarrow$, that is, $\ge^\Bi$-models
are used.  There is an important soundness result: if the programmer
has an intended interpretation which is a model, any ground atom which
succeeds is true in that model.  The ($\le$) least model is also the
least $=^\Bi$-model and the least fixed point of $\Phi_P$, which is
monotone in the truth ordering (so a least fixed point always exists).
The set of true atoms in this least model is the set of atoms which are
true in all $\ge^\Bi$-models (and $=^\Bi$-models) and is also the set of
atoms which have a successful derivation using SLD resolution. 
For these reasons, this is the accepted
semantics for Prolog programs without negation.

To support negation in the semantics, \citeN{Cla78} combined all
clauses defining a particular predicate into a single ``if and only if''
definition which uses the classical bi-implication $\leftrightarrow$.
This is called the Clark completion $comp(P)$ of a program $P$.  
Our definitions
are essentially the same, but we avoid the $\leftrightarrow$ symbol.
In this paper's terminology,
Clark used $=^\Bi$-models, which correspond to classical fixed points 
of $\Phi_P$.  Clark specifically considered logical consequences
of $comp(P)$: atoms which were true in all $=^\Bi$-models.

The soundness result above applies, and any finitely failed ground atom
must also be false in the programmer's intended interpretation, if 
this interpretation is a model.  
However, because $\Phi_P$ is non-monotone in the
truth ordering when negation is present, there may be multiple minimal
fixed points/models, or there may be none.  For example, using Clark's
semantics for the program in Figure \ref{eg-clark}, there is no model
and no fixed point due to the clause for \texttt{p(c)}, yet the query
\texttt{p(a)} succeeds and \texttt{p(d)} finitely fails.  Thus the Clark
semantics does not align particularly well with the operational semantics.

\subsection{Three-valued semantics}
\label{sec-fitting-kunen}

Even in the absence of negation, a two-valued semantics is lacking in
its inability to distinguish failure and looping.
\citeN{Mycroft} explored the use of many-valued logics,
including \Tri, to remedy this.
Mycroft discussed this for Horn clause programs, and
others, including \citeN{Fitting85} and \citeN{Kunen87},
subsequently adapted Clark's work to a three-valued logic, 
addressing the problem of how to account properly for the use of
explicit negation in programs.

In a two-valued setting, the Clark completion may be inconsistent,
witness the completion of the clause for \verb!p(c)! in
Figure~\ref{eg-clark}.
Hence the Clark completion is unable to give a reasonable meaning to
\texttt{p(a)}, \texttt{p(b)}, and \texttt{p(d)}, even though these atoms
do not depend on \texttt{p(c)}.
If we were to delete the clause for \texttt{p(c)} in Figure
\ref{eg-clark}, the Clark semantics would map \texttt{p(b)} to \textbf{f},
even though it does not finitely fail.   
The reason is that the smallest 2-valued model of the Clark completion
$\mathtt{p(b)} \Leftrightarrow \mathtt{p(b)}$ maps \texttt{p(b)} to
\textbf{f}.

However, a $=^\Tri$-model always exists for a Clark-completed program;
for example, \texttt{p(c)} takes on the third truth value.  
Moreover, since $\Phi_P$ is monotone with respect to the information
ordering, a least fixed point always exists and coincides with
the least $=^{\Tri}$-model.
Ground atoms which are \textbf{t} in this model 
(such as \texttt{p(a)} in Figure~\ref{eg-clark})
are those which have successful derivations, while 
ground atoms which are \textbf{f} (such as \texttt{p(d)}) 
are those which have finitely failed SLDNF trees~\cite{Cla78}.
Atoms with the third truth value (\texttt{p(b)} and \texttt{p(c)})
must loop.  
Atoms which are \textbf{t}
or \textbf{f} in the Fitting/Kunen semantics may also loop if the search
strategy or computation rule are unfair (even without negation, \textbf{t}
atoms may loop with an unfair search strategy).  Furthermore, when
negation is present, a computation may \emph{flounder} owing to a negated
call which never becomes ground and hence is never selected (this is a fourth
possible behaviour).  However the Fitting/Kunen
approach does align the model theoretic and fixed point semantics much more
closely to the operational semantics of Prolog than the approach of
Clark, and we can imagine an idealised logic programming language where
the alignment is precise.

$\Phi_P$ has a drawback, though: while monotone, it is not in general 
continuous.
\citeN{Blair} shows that the smallest ordinal $\beta$ for which 
$\Phi_P^\beta(\bot)$ is the least fixed point of $\Phi_P$ may not be
recursive\footnote{The (possibly transfinite) powers of $\Phi_P$ are
defined the standard way: For a successor ordinal $\beta$,
$\Phi_P^{\beta}(x) = \Phi_P(\Phi_P^{\beta - 1}(x))$, 
and for a limit ordinal $\beta$, 
$\Phi_P^{\beta}(x) = \bigsqcup _{\alpha < \beta} \Phi_P^{\alpha}(x)$.}
\citeN{Kunen87} shows that, with a semantics based on three-valued
Herbrand models (all models or the least model), the set of ground
atoms true in such models may not be recursively enumerable\footnote{%
We use $\bot$ to denote the smallest interpretation with respect to 
$\sqsubseteq$.}.
Kunen instead suggests a semantics based on \emph{any} three-valued
model and shows that truth (\textbf{t}) in all $=^\Tri$-models is
equivalent to being deemed true by $\Phi_P^n(\bot)$ for some $n \in \nat$.
Hence Kunen proposes $\Phi_P^\omega(\bot)$ as the meaning of program $P$.
For a given $P$ and ground atom $A$, it is decidable whether
$A$ is \textbf{t} in $\Phi_P^n(\bot)$, so whether $A$ is \textbf{t} in 
$\Phi_P^\omega(\bot)$ is semi-decidable.

For simplicity, in this paper we take (the possibly non-computable) 
$M = \ident{lfp}(\Phi_P)$ to be the meaning of a program, that is, the
least $=^\Tri$-model.
However, since we shall be concerned with over-approximations to
$M$, what we shall have to say will apply equally well if Kunen's
$\Phi_P^\omega(\bot)$ is assumed.

\subsection{Four-valued semantics}
\label{sec-four}

Subsequent to his three-valued proposal, Fitting recommended,
in a series of papers 
\citeyear{fitting:Fund_Info_1988,fitting:LICS1989,fitting:JLC_1991,fitting:JLP_1991,fitting:TCS_2002},
bilattices as suitable bases for logic program semantics.
The bilattice \Quad\ (Figure~\ref{fig-bilattice}(d)) was just one of 
several studied for the purpose, and arguably the most important one.

Fitting's motivation for employing four-valued logic was, apart from
the elegance of the interlaced bilattices and their algebraic
properties, the application in a logic programming language which
supports a notion of (spatially) distributed programs. 
In this setting there is a natural need for a fourth truth value, $\top$
(our \textbf{i}), to denote conflicting information received from
different nodes in a computing network.

In the language proposed by \citeN{fitting:JLP_1991}, 
the traditional logical connectives used on the 
right-hand sides of predicate definitions are explained in terms 
of the truth ordering.
Negation is reflection in the truth ordering: 
$\neg \textbf{u} = \textbf{u}$, $\neg \textbf{f} = \textbf{t}$, 
$\neg \textbf{t} = \textbf{f}$ and $\neg \textbf{i} = \textbf{i}$,
conjunction is meet ($\land$), disjunction is join ($\lor$), and 
existential quantification is the least upper bound ($\bigvee$) of 
all instances.
These tables give conjunction, disjunction and negation in \Quad:
\medskip

\begin{center}
\begin{minipage}{0.38\textwidth}
\begin{tabular}{|c||c|c|c|c|}
\cline{1-5}
$\wedge$ & \textbf{u} & \textbf{t} & \textbf{f} & \textbf{i}\\
\cline{1-5}
\vspace{-3.9mm} & & & & \\
\cline{1-5}
\textbf{u}          & \textbf{u} & \textbf{u} & \textbf{f} & \textbf{f}\\
\cline{1-5}
\textbf{t}          & \textbf{u} & \textbf{t} & \textbf{f} & \textbf{i}\\
\cline{1-5}
\textbf{f}          & \textbf{f} & \textbf{f} & \textbf{f} & \textbf{f}\\
\cline{1-5}
\textbf{i}          & \textbf{f} & \textbf{i} & \textbf{f} & \textbf{i}\\
\cline{1-5}
\end{tabular}
\end{minipage}
\begin{minipage}{0.38\textwidth}
\begin{tabular}{|c||c|c|c|c|}
\cline{1-5}
$\vee$ & \textbf{u} & \textbf{t} & \textbf{f} & \textbf{i}\\
\cline{1-5}
\vspace{-3.9mm} & & & & \\
\cline{1-5}
\textbf{u}        & \textbf{u} & \textbf{t} & \textbf{u} & \textbf{t}\\
\cline{1-5}
\textbf{t}        & \textbf{t} & \textbf{t} & \textbf{t} & \textbf{t}\\
\cline{1-5}
\textbf{f}        & \textbf{u} & \textbf{t} & \textbf{f} & \textbf{i}\\
\cline{1-5}
\textbf{i}        & \textbf{t} & \textbf{t} & \textbf{i} & \textbf{i}\\
\cline{1-5}
\end{tabular}
\end{minipage}
\begin{minipage}{0.19\textwidth}
\begin{tabular}{|c||c|}
\cline{1-2}
   $\neg$ &
\\ \cline{1-2}
\vspace{-3.9mm} &
\\ \cline{1-2}
   \textbf{u} & \textbf{u}
\\ \cline{1-2}
   \textbf{f} & \textbf{t}
\\ \cline{1-2}
   \textbf{t} & \textbf{f}
\\ \cline{1-2}
   \textbf{i} & \textbf{i}
\\ \cline{1-2}
\end{tabular}
\end{minipage}
\medskip
\end{center}

\noindent
The operations $\sqcap$ and $\sqcup$ are similarly given by 
Figure~\ref{fig-bilattice}(d).
Fitting refers to $\sqcap$ (he writes $\otimes$) as \emph{consensus},
since $x \sqcap y$ represents what $x$ and $y$ agree about.
The $\sqcup$ operation (which he writes as $\oplus$) he refers to as
\emph{gullibility}, since $x \sqcup y$ represents agreement with both
$x$ and $y$, whatever they say, including cases where they disagree.
\citeN{Palmer97} also uses this logic with another parallel logic
programming language, Andorra Kernel Language.  Although AKL does not
support $\sqcap$ or $\sqcup$ as explicit language primitives, Palmer's
AKL compiler uses such operations in its analysis of parallel
sub-computations which may or may not agree on their results.

The idea of an information (or knowledge) ordering is familiar to anybody
who has used domain theory and denotational semantics.
To give meaning to recursively defined objects we refer to fixed points
of functions defined on structures equipped with some ordering---the
information ordering.
This happens already in the three-valued approaches to semantics
discussed above. 
Three-valued semantics does use the
distinction between a truth ordering $\leq$ and an information ordering
$\sqsubseteq$, but it does not expose it as radically as the bilattice.
In Fitting's words, the three-valued approach, ``while abstracting away
some of the details of [Kripke's theory of truth] still hides the
double ordering structure''~\cite{fitting:2006}.

The logic programming language of \citeN{fitting:JLP_1991} contains
operators $\otimes$ and $\oplus$, reflecting the motivation in terms of
distributed programs.
We, on the other hand, deal with a language with traditional pure 
Prolog syntax. 
If the task was simply to model its operational semantics,
having four truth values rather than three would offer little, if any,
advantage.
However, our motivation for using four-valued logic is very different
to that of Fitting.
We find compelling reasons for the use of four-valued logic 
to explain certain programming language features, as well as to
embrace, semantically, such software engineering aspects as
program correctness with respect to programmer intent or specification,
declarative debugging, and program analysis.
We next discuss one of these aspects.

\section{Three-valued specification semantics}
\label{sec-intentional}

\citeN{sem3neg} proposed an alternative three-valued semantics.
Unlike other approaches, the objective was not to align declarative and
operational semantics.  Instead, the aim was to provide a declarative
semantics which can help programmers develop correct code in a natural
way.  Naish argued that intentions of programmers are not two-valued.  It
is generally intended that some ground atoms should succeed (be considered
\textbf{t}) and some should finitely fail (be considered \textbf{f}) but
some should never occur in practice; there is no particular intention
for how they should behave and the programmer does not care and often
does not know how they behave. An example is merging lists, where it is
assumed two sorted lists are given as input: it may be more appropriate to
consider the value of \verb!merge([3,2],[1],[1,3,2])! \emph{irrelevant}
than to give it a classical truth value, since a \emph{precondition}
is violated.  Or consider this program:
\begin{verbatim}
     or2(t, _, t).                             or3(_, t, t).
     or2(f, B, B).                             or3(B, f, B).
\end{verbatim}
It gives two alternative definitions of \texttt{or} 
(previously defined in Section~\ref{sec-scene}), both designed with the 
assumption that the first two arguments will always be Booleans.
If they are not, we consider the atom to be \emph{inadmissible}
(a term used in debugging \cite{Per86,ddscheme3}) and give it
the truth value \textbf{i}.
Interpretations can be thought of as the programmer's understanding
of a specification, where \textbf{i} is used for underspecification
of behaviour.  The same three-valued interpretation can then be used with
all three definitions of \texttt{or}.  A programmer can first fix the
interpretation then code any of these definitions and reason about their
correctness.  In contrast, both the Clark and Fitting/Kunen semantics
assign different, incompatible
meanings to the three definitions, with atoms such
as \verb@or3(4,f,4)@ and \verb@or2(t,[],t)@ considered \textbf{t} and
\verb@or3(t,[],t)@ considered \textbf{f}.  In order for the programmer's
intended interpretation to be a $=^\Bi$-model or $=^\Tri$-model, unnatural
distinctions such as these must be made.  \citeN{sem3neg} argues
that it is unrealistic for programmers to use such interpretations as 
a basis for reasoning about correctness of their programs.
In Section~\ref{sec-generalised} we consider a somewhat larger example
in more depth.

Although Naish uses \textbf{i} instead of \textbf{u} as the third truth
value, his approach is structurally the same as Fitting/Kunen's with
respect to the ordering, Figure~\ref{fig-bilattice}(b) and (c), the
$\Phi_P$ operator and the meaning of connectives used in
the body of definitions.  The key technical
difference is how Prolog's \texttt{:-}
is interpreted.  Fitting generalises Clark's classical $\leftrightarrow$
to $\cong$ or ``strong equivalence'', where heads and bodies of head
groundings must have the same truth values.
Naish defined a different ``arrow'', $\leftarrow$,
which is asymmetric, but not a conservative extension of classical
implication (so the choice of symbol is perhaps misleading).
In addition to identical truth values for heads and
bodies, Naish allows head groundings of the form $(\mathbf{i}, \mathbf{f})$
and $(\mathbf{i}, \mathbf{t})$.
The difference is captured by these tables (Fitting left, Naish
right):\footnote{We abuse notation here: $\cong$ and $\leftarrow$ are not
actually used as connectives, so the table entries should really be ``model''
and ``not model'' rather than \textbf{t} and \textbf{f}.}
\smallskip

\begin{center}
\begin{minipage}{0.48\textwidth}
\begin{tabular}{|c||c|c|c|}
\cline{1-4}
$\cong$ & \textbf{t}  & \textbf{f} & \textbf{u} \\
\cline{1-4}
\vspace{-3.9mm} & & & \\
\cline{1-4}
\textbf{t}   & \textbf{t}  & \textbf{f} & \textbf{f} \\
\cline{1-4}
\textbf{f}   & \textbf{f}  & \textbf{t} & \textbf{f} \\
\cline{1-4}
\textbf{u}   & \textbf{f}  & \textbf{f} & \textbf{t} \\
\cline{1-4}
\end{tabular}
\end{minipage}
\begin{minipage}{0.48\textwidth}
\begin{tabular}{|c||c|c|c|}
\cline{1-4}
$\leftarrow$ & \textbf{t}  & \textbf{f} & \textbf{i} \\
\cline{1-4}
\vspace{-3.9mm} & & & \\
\cline{1-4}
\textbf{t}   & \textbf{t}  & \textbf{f} & \textbf{f} \\
\cline{1-4}
\textbf{f}   & \textbf{f}  & \textbf{t} & \textbf{f} \\
\cline{1-4}
\textbf{i}   & \textbf{t}  & \textbf{t} & \textbf{t} \\
\cline{1-4}
\end{tabular}
\end{minipage}
\end{center}

\vspace*{0.6em}\noindent
Naish's arrow captures the principle that, 
if a predicate is called in an inadmissible way, 
then it does not matter if it succeeds or fails.  
The definition of
a model uses this weaker ``arrow''; we discuss it further in Section
\ref{sec-generalised}.  \citeN{sem3neg} shows that for any model,
only \textbf{t} and \textbf{i} atoms can succeed and only \textbf{f}
and \textbf{i} atoms can finitely fail.  In models of the code in
Figure \ref{eg-clark}, \texttt{p(b)} can be \textbf{t} or \textbf{f}
or \textbf{i} but \texttt{p(c)} can only be \textbf{i}.  For practical
code, programmers can reason about partial correctness using intuitive
models in which the behaviour of some atoms is unspecified.

\section{Four-valued specification semantics}
\label{sec-generalised}

The Fitting/Kunen and Naish approaches all use three truth values,
the Kleene strong three-valued logic for the connectives in the body of
definitions, and the same immediate consequence operator.  It is thus
tempting to assume that the ``third'' truth value in these approaches
is the same in some sense.  This is implicitly assumed by 
\citeN{sem3neg}
when he compares different approaches.
However, the third truth value is used for very different purposes in
the approaches being compared.  
\citeN{Fitting85} and \citeN{Kunen87} use it to make the semantics 
more precise than Clark---distinguishing success and finite failure 
from nontermination (neither success nor finite failure).  
\citeN{sem3neg} uses it to make the semantics
\emph{less} precise than Clark, allowing a truth value corresponding
to success or finite failure.  Thus we believe it is best to treat the
third truth values of Fitting/Kunen and Naish as \emph{duals} instead 
of the same value. Naish treats \textbf{i} as the bottom element
whereas in $\Quad$ it is more naturally the top element, with the
ordering of Naish inverted.
Because conjunction, disjunction and negation in \Quad\ are
symmetric in the information order, the third value in the Kleene strong
three-valued logic can map to either the top or bottom element in \Quad.
This is why the third truth values in Fitting/Kunen and Naish are treated
identically, even though they are better viewed as semantically distinct.

The four values \textbf{t}, \textbf{f}, \textbf{i} and \textbf{u} are
associated with truth/success, falsehood/finite failure, inadmissibility
(the Naish third value) and looping/error (the Fitting/Kunen third value).
Inadmissibility can be seen as saying that both success and failure are
correct, so we can see it as the union of both.  Atoms which are
\textbf{u} in the Fitting/Kunen semantics neither succeed nor finitely fail.  
Thus, as already pointed out,
the information ordering can also be seen as the
set ordering, $\subseteq$, if we interpret the truth values in $\Quad$
as sets of Boolean values.  

\begin{figure}
\begin{verbatim}
            % Checks A-B = E-F, where all are natural numbers,
            % represented in Peano style with 0 and s/1
            % This definition is common to programs P1-P4
            eq_diff(A, B, E, F) :- sub(A, B, D), sub(E, F, D).

% sub/3 definition for P1               % sub/3 definition for P3
sub(0, 0, 0).                           sub(A, A, 0).
sub(s(A), 0, s(D)) :- sub(A, 0, D).     sub(A, B, s(D)) :-
sub(s(A), s(B), D) :- sub(A, B, D).             not(A=B), sub(A, s(B), D).

% sub/3 definition for P2               % sub/3 definition for P4
sub(A, 0, A).                           sub(A, A, 0).
sub(s(A), s(B), D) :- sub(A, B, D).     sub(A, B, s(D)) :- sub(A, s(B), D).
\end{verbatim}
\caption{Programs P1--P4 for subtraction over natural numbers}
\label{fig-sub}
\end{figure}

Consider the four programs depicted in Figure \ref{fig-sub}, which have
a common definition of \verb@eq_diff/4@ (which checks if the differences
of two pairs of natural numbers are the same) but different definitions
of \verb@sub/3@ (which performs subtraction over
natural numbers).  These programs have different sets of ground atoms which
succeed and finitely fail; we discuss these in more detail
later.  For some of the programs there are ground atoms
which neither succeed nor finitely fail and the Fitting/Kunen semantics
has the advantage of reflecting this whereas Clark's
cannot.  For example, the same set of atoms succeed in P3 and P4
(the least two-valued models, used by Clark, exist and are the same) but
some atoms such as \verb@eq_diff(s(0),0,0,s(0))@ finitely fail in P3
but loop in P4 (so the least three-valued models differ).  Naish's
semantics has the advantage of allowing the programmer to reason about
correctness with respect to intentions or specifications which are
imprecise.  For example, a programmer may only want to specify the
desired behaviour of ground atoms \verb@eq_diff(A,B,E,F)@ where all
arguments are natural numbers (of the form $s^n(0)$) and \verb@A-B@ and
\verb@E-F@ are defined over natural numbers; other atoms can reasonably
be considered inadmissible. This allows us to simply establish partial
correctness of all four programs --- the behaviours differ, but only
for atoms the programmer does not care about.  The four-valued semantics
which we propose here combines the advantages of the Fitting/Kunen and
Naish approaches within a single unified framework.

We now show how Naish's semantics can be combined with that of
Fitting/Kunen and generalised to \Quad.
Adding the truth value \textbf{i} is a conservative extension to the
Fitting/Kunen semantics.  The three-valued fixed points of $\Phi_P$ are
preserved, including the least fixed point, and the
way the semantics describes what is \emph{computed} is unchanged,
though the additional truth value can be useful for \emph{approximating}
what is computed.
However, adding the truth value \textbf{u} to the Naish semantics
\emph{does} allow us to describe more precisely  what is \emph{intended}.
There are occasions when both the success and finite failure of an atom
are considered incorrect behaviour and thus \textbf{u} is an appropriate
value to use in the intended interpretation.  We give three examples.
The first is an interpreter for a Turing-complete language.  If the
interpreter is given (the representation of) a looping program it should
not succeed and it should not fail.  The second is an operating system.
Ignoring the details of how interaction with the real world is modelled
in the language, termination means the operating system crashes.
The third is code which is only intended to be called in limited ways,
but is expected to be robust and check its inputs are well formed.
Exceptions or abnormal termination with an error message are best not
considered success or finite failure.  Treating them in the same way as
infinite loops in the semantics may not be ideal but it is more expressive
than using the other three truth values (indeed, ``infinite'' loops are
never really infinite because resources are finite and hence some form
of abnormal termination results).

\citeN{sem3neg}
defines models in terms of the $\leftarrow$ described earlier and shows
that $I$ is a model if and only if 
$I \sqsupseteq^\Tri \Phi_P(I)$.
The significance of this proposition is not noted by \citeN{sem3neg},
but it prompts a key observation: the $\leftarrow$
defines the information order on truth values!  The classical arrow
defines the truth ordering on two values; Naish's arrow defines the
orthogonal ordering in the three-valued extension.  It is therefore
clear how Naish's arrow can be generalised to $\Quad$.
The models of
\citeN{sem3neg} are $\sqsupseteq^\Tri$-models, which can be
generalised to $\sqsupseteq^\Quad$-models, and Naish's arrow
is generalised as $\sqsupseteq^\Quad$ (treating both as connectives),
with the following truth table:

\begin{center}
\begin{tabular}{|c||c|c|c|c|}
\cline{1-5}
$\sqsupseteq^\Quad$ & \textbf{u} & \textbf{t} & \textbf{f} & \textbf{i}\\
\cline{1-5}
\vspace{-3.9mm} & & & & \\
\cline{1-5}
\textbf{u}          & \textbf{t} & \textbf{f} & \textbf{f} & \textbf{f}\\
\cline{1-5}
\textbf{t}          & \textbf{t} & \textbf{t} & \textbf{f} & \textbf{f}\\
\cline{1-5}
\textbf{f}          & \textbf{t} & \textbf{f} & \textbf{t} & \textbf{f}\\
\cline{1-5}
\textbf{i}          & \textbf{t} & \textbf{t} & \textbf{t} & \textbf{t}\\
\cline{1-5}
\end{tabular}
\end{center}

\begin{proposition} \rm
\label{prop-fourval-sqsubset}
$I$ is a $\sqsupseteq^\Quad$-model of $P$ iff $\Phi_P(I) \sqsubseteq I$.
\end{proposition}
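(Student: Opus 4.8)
The plan is to argue by the same pattern as the proof of the earlier proposition characterising fixed points of $\Phi_P$ as $=^{\mathcal{D}}$-models, but now tracking the asymmetric relation $\sqsupseteq^\Quad$ instead of $=^\Quad$. First I would unfold the definition of $\sqsupseteq^\Quad$-model: $I$ is a $\sqsupseteq^\Quad$-model of $P$ iff for every predicate definition $(H,B)\in P$ and every head grounding $(H\theta,B\theta)$, the relation $\sqsupseteq^\Quad(I(H\theta),I(B\theta))$ holds, i.e.\ $I(H\theta) \sqsupseteq I(B\theta)$ in the information ordering of $\Quad$. On the other side, by the definition of $\Phi_P$, the interpretation $\Phi_P(I)$ assigns to the ground atom $H\theta$ exactly the truth value $I(B\theta)$, where $(H\theta,B\theta)$ ranges over head groundings. (Here I would note in passing that the head-grounding setup guarantees that every ground atom $H\theta$ in the Herbrand base is the head of exactly one head grounding of exactly one definition, so $\Phi_P(I)$ is well defined and its value at $H\theta$ is precisely $I(B\theta)$.)

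Next I would observe that the condition $\Phi_P(I) \sqsubseteq I$ is, by the pointwise definition of $\sqsubseteq$ on $\Her \fun \Quad$, equivalent to: for every ground atom $A$, $\Phi_P(I)(A) \sqsubseteq I(A)$. Since the ground atoms $A$ are exactly the heads $H\theta$ of head groundings, and $\Phi_P(I)(H\theta) = I(B\theta)$, this says: for every head grounding $(H\theta,B\theta)$, $I(B\theta) \sqsubseteq I(H\theta)$, which is the same as $I(H\theta) \sqsupseteq I(B\theta)$. But that is exactly the $\sqsupseteq^\Quad$-model condition spelled out in the first step. So the two statements are literally the same condition viewed from the two sides, and the equivalence follows.

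The one point that needs a word of care — and the only place a reader might want reassurance — is the match between the truth table given for $\sqsupseteq^\Quad$ as a ``connective'' and the actual information ordering $\sqsubseteq$ depicted in Figure~\ref{fig-bilattice}(d): one should check that the table entry in row $x$, column $y$ is $\mathbf{t}$ exactly when $x \sqsupseteq y$, i.e.\ when $x = \mathbf{i}$, or $y = \mathbf{u}$, or $x = y$. A quick scan of the sixteen entries confirms this (the $\mathbf{t}$ entries are precisely the $\mathbf{u}$ column, the $\mathbf{i}$ row, and the diagonal), so the relational reading $\mathcal{R}^\Quad = {\sqsupseteq^\Quad}$ used in the definition of $\mathcal{R}^\mathcal{D}$-model coincides with ``$I(H\theta)$ is $\sqsupseteq$-above $I(B\theta)$''. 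I do not expect any real obstacle; the proof is a direct unwinding of definitions, essentially identical in structure to the proof of the fixed-point proposition, with $=$ replaced by $\sqsupseteq$ and with attention paid to the direction of the inequality when passing between $\Phi_P(I) \sqsubseteq I$ and the per-head-grounding condition.
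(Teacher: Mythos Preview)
Your proposal is correct and follows essentially the same approach as the paper: both proofs simply unwind the definitions of $\sqsupseteq^\Quad$-model and $\Phi_P$ and observe that the per-head-grounding condition $I(B\theta) \sqsubseteq I(H\theta)$ is the pointwise statement of $\Phi_P(I) \sqsubseteq I$. The paper phrases the intermediate step via the pair-of-sets view (``if $I$ makes $B$ true then $I$ makes $H$ true, and if $I$ makes $B$ false then $I$ makes $H$ false''), whereas you work directly with the pointwise ordering and $\Phi_P(I)(H\theta) = I(B\theta)$; this is a presentational difference only.
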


\begin{proof}
$I$ is a $\sqsupseteq^\Quad$-model iff,
for every head grounding $(H, B)$ of $P$, $I(B) \sqsubseteq I(H)$.  
This is equivalent to stating that 
if $I$ makes $B$ true then $I$ makes $H$ true, and also,
if $I$ makes $B$ false then $I$ makes $H$ false.
But this is the case iff $\Phi_P(I) \sqsubseteq I$,
by the definition of $\Phi_P$.
\end{proof}
It is easy to see that if $I$ is a $\sqsupseteq^\Tri$-model of $P$ 
then $I$ is a $\sqsupseteq^\Quad$-model of $P$.
However, the converse is not 
necessarily true, so the results of \citeN{sem3neg}
cannot be used to show properties of four-valued models.  
However, such properties can be proved directly,
using properties of the lattice of interpretations.

\begin{proposition} \rm
\label{prop-lfp-lmod}
The least $\sqsupseteq^\Quad$-model of $P$ is $lfp(\Phi_P)$.
\end{proposition}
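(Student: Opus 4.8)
The plan is to combine Proposition~\ref{prop-fourval-sqsubset} with standard lattice-theoretic fixed point reasoning on the complete lattice $(\Interp, \sqsubseteq^\Quad)$. First I would observe that $\Interp = \pset{\Her} \times \pset{\Her}$, ordered by the pointwise extension of $\sqsubseteq^\Quad$, is a complete lattice: viewing an interpretation as a pair $(T_I, F_I)$ of ground-atom sets, the information order corresponds to the product of two subset orders, so meets and joins are computed componentwise by intersection and union. Hence every subset of $\Interp$ has a greatest lower bound. Next I would recall that $\Phi_P$ is monotone with respect to $\sqsubseteq^\Quad$; this was already noted for $\Tri$ in Section~\ref{sec-fitting-kunen}, and the same argument works for $\Quad$ since the definition of $\Phi_P$ is uniform across the value domains and the ``made true''/``made false'' clauses are each monotone in the appropriate sets. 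By the Knaster--Tarski theorem, $\Phi_P$ therefore has a least fixed point, $\lfp(\Phi_P)$, and moreover $\lfp(\Phi_P) = \bigsqcap \{\, I \mid \Phi_P(I) \sqsubseteq I \,\}$, the meet of all pre-fixed points.

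Now I would invoke Proposition~\ref{prop-fourval-sqsubset}: an interpretation $I$ is a $\sqsupseteq^\Quad$-model of $P$ if and only if $\Phi_P(I) \sqsubseteq I$, i.e.\ exactly when $I$ is a pre-fixed point of $\Phi_P$. So the set of $\sqsupseteq^\Quad$-models of $P$ is literally the set of pre-fixed points of $\Phi_P$, and its least element (which exists by Knaster--Tarski as just noted) is $\lfp(\Phi_P)$. It remains only to check that $\lfp(\Phi_P)$ is itself a $\sqsupseteq^\Quad$-model, which is immediate since a fixed point is in particular a pre-fixed point. That gives the statement.

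I would also mention, as a remark, that the fixed point can be reached by iteration from $\bot$ (the $\sqsubseteq^\Quad$-least interpretation $(\emptyset,\emptyset)$): transfinitely iterating the monotone $\Phi_P$ from $\bot$ yields $\lfp(\Phi_P)$, though as Blair's result (cited earlier) shows, the closure ordinal need not be $\omega$ or even recursive. This is not needed for the proposition but connects it to the Kunen discussion.

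The only real subtlety — and the one step I would take care to spell out rather than wave at — is the monotonicity of $\Phi_P$ with respect to $\sqsubseteq^\Quad$, because the fourth value $\mathbf{i}$ behaves as a top element and one must confirm that the body-evaluation clauses (conjunction, disjunction, negation, existential closure) are all $\sqsubseteq^\Quad$-monotone so that $I \sqsubseteq^\Quad J$ implies $I(B) \sqsubseteq^\Quad J(B)$ for every head-grounding body $B$. This reduces to checking that $\land$, $\lor$, $\lnot$ on $\Quad$ are monotone in the information order (which holds because the bilattice is interlaced, as stated in Section~\ref{sec-interpretations}) together with a routine induction on the structure of $B$; everything else is a direct appeal to Knaster--Tarski and Proposition~\ref{prop-fourval-sqsubset}.
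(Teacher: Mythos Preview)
Your proposal is correct and follows essentially the same route as the paper: identify $\sqsupseteq^\Quad$-models with pre-fixed points of $\Phi_P$ via Proposition~\ref{prop-fourval-sqsubset}, then invoke monotonicity of $\Phi_P$ and Knaster--Tarski to conclude that $\lfp(\Phi_P)$ is the least such. The paper's proof is terser, taking the complete-lattice structure and monotonicity as given rather than spelling them out, but the argument is the same.
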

\begin{proof}
By the definition of $\sqsupseteq^\Quad$-model, $I$ is a 
$\sqsupseteq^\Quad$-model iff $\Phi_P(I) \sqsubseteq I$.
Since $\Phi_P$ is monotone,
the Knaster-Tarski theorem \cite{Tarski} establishes
$lfp(\Phi_P)$ as the least $I$ such that $\Phi_P(I) \sqsubseteq I$.
Hence $lfp(\Phi_P)$ is the least $\sqsupseteq^\Quad$-model of $P$.
\end{proof}
For reasoning about partial correctness, the relationship between truth
values in an interpretation and operational behaviour is crucial.

\begin{theorem} \rm
\label{thm-sound-ge}
If $I \sqsupseteq^\Quad lfp(\Phi_P)$ then no \textbf{t} atoms in $I$ can
finitely fail, no \textbf{f} atoms in $I$ can succeed, and no \textbf{u}
atoms in $I$ can finitely fail or succeed.  
\end{theorem}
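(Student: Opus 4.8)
The plan is to derive the theorem from two ingredients only: the elementary structure of $(\Quad,\sqsubseteq)$, and the soundness of SLDNF resolution with respect to $M = lfp(\Phi_P)$. Recall from Section~\ref{sec-fitting-kunen} that $M$, although an interpretation over $\Quad$, is in fact three-valued: starting from $\bot = (\emptyset,\emptyset)$, the operator $\Phi_P$ never manufactures the glut value \textbf{i}, because the Kleene connectives, negation, and the truth-join used for existential quantification all map $\{\mathbf{u},\mathbf{f},\mathbf{t}\}$ into itself, and the information-joins taken at limit stages are joins of $\sqsubseteq$-chains of three-valued interpretations and hence remain three-valued. Consequently the standard Fitting/Kunen soundness results apply verbatim, and the two facts I would invoke are: if a ground atom $A$ succeeds then $M(A) = \mathbf{t}$; and if $A$ finitely fails then $M(A) = \mathbf{f}$. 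These are the \emph{soundness} directions alone, and in particular they hold irrespective of search strategy or computation rule --- which is exactly why the theorem is stated one-directionally rather than as an equivalence.

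Next I would unpack the hypothesis. By the pointwise definition of $\sqsubseteq^\Quad$ on interpretations, $I \sqsupseteq^\Quad lfp(\Phi_P)$ means precisely that $M(A) \sqsubseteq I(A)$ for every $A \in \Her$. From Figure~\ref{fig-bilattice}(d) I would read off the three $\sqsubseteq$-downsets of $\Quad$ that matter: the only element below \textbf{u} is \textbf{u} itself; the elements below \textbf{f} are \textbf{u} and \textbf{f}; and the elements below \textbf{t} are \textbf{u} and \textbf{t}. In particular \textbf{t} and \textbf{f} are $\sqsubseteq$-incomparable, and \textbf{u} is the $\sqsubseteq$-least element.

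The three cases of the conclusion then fall out directly. If $I(A) = \mathbf{t}$, then $M(A) \sqsubseteq \mathbf{t}$, so $M(A) \in \{\mathbf{u},\mathbf{t}\}$ and hence $M(A) \neq \mathbf{f}$; were $A$ to finitely fail, soundness of finite failure would force $M(A) = \mathbf{f}$, a contradiction, so no \textbf{t} atom of $I$ can finitely fail. Symmetrically, if $I(A) = \mathbf{f}$, then $M(A) \sqsubseteq \mathbf{f}$, so $M(A) \neq \mathbf{t}$, and a successful derivation for $A$ would force $M(A) = \mathbf{t}$, again a contradiction. Finally, if $I(A) = \mathbf{u}$, then $M(A) \sqsubseteq \mathbf{u}$ forces $M(A) = \mathbf{u}$, so $M(A)$ is neither \textbf{t} nor \textbf{f}; hence $A$ can neither succeed nor finitely fail.

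There is no genuine obstacle here; the one point that deserves care is the appeal to SLDNF soundness with respect to the three-valued completion semantics, together with the accompanying observation that $lfp(\Phi_P)$ computed over $\Quad$ really is the three-valued least fixed point that that theory concerns. I would also add a closing remark that the same argument goes through unchanged if Kunen's $\Phi_P^\omega(\bot)$ is taken as the meaning of $P$ instead: since $\Phi_P^\omega(\bot) \sqsubseteq lfp(\Phi_P) \sqsubseteq I$, the interpretation $I$ over-approximates $\Phi_P^\omega(\bot)$ as well, and soundness of success and of finite failure holds with respect to $\Phi_P^\omega(\bot)$ too (Kunen's theorem).
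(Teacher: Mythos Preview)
Your argument is correct and follows essentially the same route as the paper's proof: both observe that $lfp(\Phi_P)$ computed over $\Quad$ coincides with the three-valued least fixed point, invoke the Fitting/Kunen soundness results to tie success and finite failure to the values $\mathbf{t}$ and $\mathbf{f}$ in $lfp(\Phi_P)$, and then read off the three cases from the down-sets of $\sqsubseteq$. Your version is more explicit about \emph{why} the four-valued $lfp$ stays three-valued (closure of the Kleene connectives and of chain-joins under $\{\mathbf{u},\mathbf{f},\mathbf{t}\}$), and your closing remark on $\Phi_P^\omega(\bot)$ is a useful addition, but neither departs from the paper's line of reasoning.
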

\begin{proof}
The least fixed point in the four-valued case is the same as
the least fixed point in the three-valued case.
Hence \cite{Kunen87}
finitely failed atoms are \textbf{f} in $lfp(\Phi_P)$, 
successful atoms are \textbf{t} in $lfp(\Phi_P)$,
and \textbf{u} atoms in $lfp(\Phi_P)$ must loop.
From the $\sqsubseteq$ ordering, an atom mapped to \textbf{f} by $I$ 
can only be mapped to \textbf{f} or \textbf{u} by $lfp(\Phi_P)$.
Similarly, atoms which $I$ maps to \textbf{t} can only be 
mapped to \textbf{t} or \textbf{u} by $lfp(\Phi_P)$, and \textbf{u} 
atoms can only be mapped to \textbf{u}.
\end{proof}

\begin{corollary} \rm
\label{thm-sound}
If $I$ is a $\sqsupseteq^\Quad$-model of $P$ then no \textbf{t} atoms in $I$ can
finitely fail, no \textbf{f} atoms in $I$ can succeed and no \textbf{u}
atoms in $I$ can finitely fail or succeed.  
\end{corollary}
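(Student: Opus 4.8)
The plan is to derive the corollary directly from Theorem~\ref{thm-sound-ge} together with Proposition~\ref{prop-lfp-lmod}, so the work is essentially a matter of connecting the two. First I would note that Proposition~\ref{prop-lfp-lmod} identifies $\lfp(\Phi_P)$ as the \emph{least} $\sqsupseteq^\Quad$-model of $P$, where ``least'' is understood with respect to the information ordering $\sqsubseteq$ on $\Her \fun \Quad$ (this is the ordering under which $\Her\fun\Quad$ is the complete lattice on which Knaster--Tarski was invoked). Consequently, if $I$ is any $\sqsupseteq^\Quad$-model of $P$, then $\lfp(\Phi_P) \sqsubseteq I$, equivalently $I \sqsupseteq^\Quad \lfp(\Phi_P)$.

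With that inequality in hand, the hypothesis of Theorem~\ref{thm-sound-ge} is satisfied verbatim, and its conclusion is exactly the statement of the corollary: no \textbf{t} atom of $I$ can finitely fail, no \textbf{f} atom of $I$ can succeed, and no \textbf{u} atom of $I$ can finitely fail or succeed. So the proof is a two-line argument: invoke Proposition~\ref{prop-lfp-lmod} to get $I \sqsupseteq^\Quad \lfp(\Phi_P)$, then invoke Theorem~\ref{thm-sound-ge}.

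The only point that requires a moment's care — and the closest thing to an ``obstacle'' — is being explicit that the minimality asserted in Proposition~\ref{prop-lfp-lmod} is with respect to $\sqsubseteq$, and that ``$\lfp(\Phi_P) \sqsubseteq I$'' is precisely what the premise ``$I \sqsupseteq^\Quad \lfp(\Phi_P)$'' of Theorem~\ref{thm-sound-ge} means (after pointwise extension of $\sqsupseteq^\Quad$ to interpretations). Everything operational — that finitely failed atoms are \textbf{f} in $\lfp(\Phi_P)$, successful atoms are \textbf{t}, and the remaining atoms loop — has already been absorbed into Theorem~\ref{thm-sound-ge} via Kunen's results, so no new operational reasoning is needed here. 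I would keep the proof to two sentences and not re-derive any of the lattice-theoretic or operational facts.
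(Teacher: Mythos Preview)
Your proposal is correct and matches the paper's proof exactly: the paper simply writes ``From Theorem~\ref{thm-sound-ge} and Proposition~\ref{prop-lfp-lmod}.'' Your added remark clarifying that minimality in Proposition~\ref{prop-lfp-lmod} is with respect to $\sqsubseteq$ (so that any $\sqsupseteq^\Quad$-model $I$ satisfies $I \sqsupseteq^\Quad \lfp(\Phi_P)$) is a helpful elaboration of what the paper leaves implicit.
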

\begin{proof}
From Theorem \ref{thm-sound-ge} and Proposition \ref{prop-lfp-lmod}.
\end{proof}
These results about the behaviour of \textbf{t} and \textbf{f} atoms are
essentially the two soundness theorems, for finite failure and success,
respectively, of \citeN{sem3neg}.  The result for \textbf{u} atoms is new.
The relationship between the (idealised) operational semantics and various
forms of four-valued model-theoretic semantics can be summarised by
the following Table (the last row summarises Corollary~\ref{thm-sound}).
It is a refinement of Table 1 of \citeN{sem3neg}, which is the same except
it uses three-valued models and conflates \textbf{i} and \textbf{u}.
\medskip

\begin{center}
\begin{tabular}{|l|c|c|c|}
\cline{1-4}
\textbf{operational} & \textbf{succeed} & \textbf{loop} & \textbf{fail} \\
\cline{1-4}
least $=^\Quad$-model & \textbf{t} & \textbf{u} & \textbf{f} \\
\cline{1-4}
any $=^\Quad$-model & \textbf{t} &
		\textbf{t}/\textbf{u}/\textbf{i}/\textbf{f} & \textbf{f} \\
\cline{1-4}
any $\sqsupseteq^\Quad$-model & \textbf{t}/\textbf{i} &
		\textbf{t}/\textbf{u}/\textbf{i}/\textbf{f} & \textbf{i}/\textbf{f} \\
\cline{1-4}
\cline{1-4}
\end{tabular}
\end{center}

\begin{table}
\begin{tabular}{|l|c|c|c|c|c|c|c|}
\cline{1-8}
                                   &$I_0$&$I_1$&$I_2$&$I_3$&$I_3'$&$I_3''$&$I_4$\\
\cline{1-8}
\verb@eq_diff(s(0),0,s(s(0)),s(0))@ & \textbf{t}  & \textbf{t}  & \textbf{t}  & \textbf{t}  & \textbf{t}  & \textbf{t}  & \textbf{t}\\
\verb@eq_diff(s(0),0,0,0)@          & \textbf{f}  & \textbf{f}  & \textbf{f}  & \textbf{f}  & \textbf{f}  & \textbf{f}  & \textbf{u}\\
\verb@eq_diff([],[],[],[])@         & \textbf{i}  & \textbf{f}  & \textbf{f}  & \textbf{t}  & \textbf{t}  & \textbf{t}  & \textbf{t}\\
\verb@eq_diff([],0,[],0)@           & \textbf{i}  & \textbf{f}  & \textbf{t}  & \textbf{u}  & \textbf{f}  & \textbf{t}  & \textbf{u}\\
\verb@eq_diff(s(0),0,0,s(0))@       & \textbf{i}  & \textbf{f}  & \textbf{f}  & \textbf{f}  & \textbf{f}  & \textbf{f}  & \textbf{u}\\
\verb@eq_diff(0,s(0),0,s(0))@       & \textbf{i}  & \textbf{f}  & \textbf{f}  & \textbf{u}  & \textbf{f}  & \textbf{t}  & \textbf{u}\\
\cline{1-8}
P1 least model                      &    & \chk  &    &    &    &    &  \\
P2 least model                      &    &    & \chk  &    &    &    &  \\
P3 least model                      &    &    &    & \chk  &    &    &  \\
P4 least model                      &    &    &    &    &    &    & \chk\\
\cline{1-8}
P1 $=^\Quad$-model                  &    & \chk  &    &    &    &    &  \\
P2 $=^\Quad$-model                  &    &    & \chk  &    &    &    &  \\
P3 $=^\Quad$-model                  &    &    &    & \chk  & \chk  & \chk  &  \\
P4 $=^\Quad$-model                  &    &    &    &    &    &    & \chk\\
\cline{1-8}
P1 $\sqsupseteq^\Quad$-model        & \chk  & \chk  &    &    &    &    &  \\
P2 $\sqsupseteq^\Quad$-model        & \chk  &    & \chk  &    &    &    &  \\
P3 $\sqsupseteq^\Quad$-model        & \chk  &    &    & \chk  & \chk  & \chk  &  \\
P4 $\sqsupseteq^\Quad$-model        & \chk  &    &    & \chk  & \chk  & \chk  & \chk\\
\cline{1-8}
\end{tabular}
\caption{Seven interpretations of programs P1--P4 from Figure~\ref{fig-sub}}
\label{fig-sub-models}
\end{table}

\vspace{3ex}
\noindent
Consider again the four programs depicted in Figure \ref{fig-sub}.
Table~\ref{fig-sub-models} describes seven interpretations for these
programs.  $I_0$ is the typically intended interpretation, 
with inadmissibility of \verb@eq_diff/4@ defined as before, 
\verb@sub(A,B,C)@ inadmissible
if \verb@A@ or \verb@B@ are not natural numbers or \verb@B>A@, and other
atoms partitioned into \textbf{t} and \textbf{f} in the intuitive way.
$I_1$, $I_2$, $I_3$ and $I_4$ are the least four-valued models of the programs
$P1$--$P4$, respectively.  The truth values in these interpretations also
align with the operational behaviour of the atoms in Prolog. $I_3'$ and
$I_3''$ are the same as $I_3$ except that atoms which are $\mathbf{u}$ in
$I_3$ are $\mathbf{f}$ and $\mathbf{t}$ in $I_3'$ and $I_3''$, respectively.
The top section of Table~\ref{fig-sub-models} gives the truth values
of several representative atoms for each interpretation; we assume the
existence of constant \verb@[]@ in the set of function symbols to show
behaviour of ``ill-typed'' atoms.  $I_1$, $I_3'$ and $I_3''$ are two-valued,
with $I_1$ the least two-valued model (using the truth ordering) of $P1$
and $I_3'$ the least two-valued model of both $P3$ and $P4$.

The later parts of Table~\ref{fig-sub-models} give which of these
interpretations are certain kinds of four-valued models for the different
programs.  The four least models of the respective programs are distinct,
reflecting the different behaviours.  $I_3'$ and $I_3''$ are not the least
model of $P3$ but they are $=^\Quad$-models.  

Note carefully that the intended interpretation,
$I_0$, is a $\sqsupseteq^\Quad$-model of all programs.  
For the four
interpretations shown which are $\sqsupseteq^\Quad$-models of $P3$,
we have $I_0 \sqsupseteq I_3' \sqsupseteq I_3$ and $I_0 \sqsupseteq I_3''
\sqsupseteq I_3$ with $I_3'$ and $I_3''$ incomparable in the information
order.  $P4$ also has all these interpretations as models, along with
$I_4$, which is below $I_3$.  

Also note how we use the two non-classical values
in $\Quad$, that is, \textbf{u} and \textbf{i}, 
for quite distinct purposes.
The two- and three-valued approaches to semantics do not allow such a
complete picture of how these programs behave, along with the ways they
can be viewed by programmers.

The use of $\sqsupseteq^\Quad$-models
allows simple and intuitive verification of partial correctness of
all programs but does not distinguish between total correctness ($P1$--$P3$)
and only partial correctness ($P4$).  However, even analysis of least
models does not guarantee total correctness for Prolog programs because
alignment of truth values and behaviour assumes fairness of the search
strategy (for success) and the computation rule (for finite failure)
and non-floundering whenever negation is present.  For example, if we
reverse the order of the two sub-goals in the \verb@sub/3@ definition
in $P3$ then with Prolog's normal left to right computation rule, $P3$
behaves the same as $P4$ for the atoms shown.

\begin{figure}[t]
\scalebox{0.9}{
\begin{pspicture}(-2,-2)(7,2.2)   
\pscircle(0,0){2}
\pscustom{%
\pscurve%
(0.0,0.0)
(0.1,0.0)
(0.2,-0.04)
(0.3,0.08)
(0.4,-0.02)
(0.5,0.12)
(0.6,-0.09)
(0.7,-0.03)
(0.8,0.0)
(0.9,0.0)
(1.0,0.0)
(1.1,0.0)
(1.2,0.0)
(1.3,0.0)
(1.4,0.0)
(1.5,0.0)
(1.6,0.0)
(1.7,0.0)
(1.8,0.0)
(1.9,0.0)
(2.0,0.0)
\rotate{120}
\pscurve[liftpen=2]%
(0.0,0.0)
(0.1,0.0)
(0.2,-0.04)
(0.3,0.08)
(0.4,-0.02)
(0.5,0.12)
(0.6,-0.09)
(0.7,-0.03)
(0.8,0.0)
(0.9,0.0)
(1.0,0.0)
(1.1,0.0)
(1.2,0.0)
(1.3,0.0)
(1.4,0.0)
(1.5,0.0)
(1.6,0.0)
(1.7,0.0)
(1.8,0.0)
(1.9,0.0)
(2.0,0.0)
\rotate{90}
\pscurve[liftpen=2]%
(0.0,0.0)
(0.1,0.0)
(0.2,-0.04)
(0.3,0.08)
(0.4,-0.02)
(0.5,0.12)
(0.6,-0.09)
(0.7,-0.03)
(0.8,0.0)
(0.9,0.0)
(1.0,0.0)
(1.1,0.0)
(1.2,0.0)
(1.3,0.0)
(1.4,0.0)
(1.5,0.0)
(1.6,0.0)
(1.7,0.0)
(1.8,0.0)
(1.9,0.0)
(2.0,0.0)
}
\rput[c](-1.1,0.2){\textbf{u}}
\rput[c](0.6,1){\textbf{t}}
\rput[c](0.6,-1){\textbf{f}}
\pscircle(5.0,0){2}
\pscircle(5.0,0){1}
\pscustom{%
\psline(6.0,0)(7.0,0)
\translate(5.0,0)
\rotate{150}
\translate(-5.0,0)
\psline[liftpen=2](6.0,0)(7.0,0)
\translate(5.0,0)
\rotate{40}
\translate(-5.0,0)
\psline[liftpen=2](6.0,0)(7.0,0)
}
\rput[c](3.5,0.2){\textbf{u}}
\rput[c](5.0,0){\textbf{i}}
\rput[c](5.6,1.4){\textbf{t}}
\rput[c](5.6,-1.4){\textbf{f}}
\end{pspicture}
}
\caption{Least vs typical intended $\sqsupseteq^\Quad$-model\label{fig-venn}}
\end{figure}

Figure \ref{fig-venn} gives a graphical representation of how the
least model of a program compares with a typical intended model.  In the least
model, no atoms are \textbf{i}, and (ideally) there is a correspondence between
the truth values of atoms and their behaviour.
However, the distinction between these categories can be subtle and
un-intuitive (hence the wiggly lines).
For example, the atom \verb@eq_diff([],0,[],0)@ may succeed, 
finitely fail or loop, depending on how \verb@sub/3@ is coded.
In a typical intended interpretation
there are atoms which are \textbf{i} (they may have any other truth value
in the least model).  This allows the distinction between the categories
to be more intuitive and allows a single interpretation to be a model
of many different programs with different behaviours for the \textbf{i}
atoms.  The set of \textbf{u} atoms in a typical intended interpretation
is a subset of the \textbf{u} atoms in the minimal model (often it is
the empty set, which corresponds to a three-valued model of
\citeN{sem3neg}).  Atoms which
are \textbf{u} in the minimal model can have any truth value in the
intended model.

\begin{figure}[t]
\begin{center}
\begin{pspicture}(-1,0)(3,3.2) 
\psset{nodesep=3pt}

\rput[c](0,0){\rnode[c]{E2}{$=^\Bi$}}
\rput[c](0,1){\rnode[c]{E3}{$=^\Tri$}}
\rput[c](-1,2){\rnode[c]{E4}{$=^\Quad$}}
\rput[c](1,2){\rnode[c]{S3}{$\sqsupseteq^\Tri$}}
\rput[c](0,3){\rnode[c]{S4}{$\sqsupseteq^\Quad$}}

\ncline{S3}{E3}
\ncline{S3}{S4}
\ncline{E3}{E4}
\ncline{S4}{E4}
\ncline{E2}{E3}

\rput[c](2.8,1.8){weaker}
\psline[arrows=->](2,1.1)(2,2.5)

\end{pspicture}
\end{center}
\caption{Relationship between model definitions\label{fig-model-flex}}
\end{figure}

Figure~\ref{fig-model-flex} shows the relationship between the five
different definitions of a model we have considered.  Any interpretation
which is a model according to one definition is also a model according
to all definitions
above.
Weaker definitions of models
allow more flexibility in how we think of our programs, yet still
guarantee partial correctness.

\section{A ``model meet'' property}
\label{sec-model-intersect}

With the classical logic approach for definite clause programs, we have a
useful model intersection property: if $M$ and $N$ are (the set of true
atoms in) models then $M \cap N$ is (the set of true atoms in) a model.
Proposition 1 of \citeN{sem3neg} generalises this result using the
truth ordering for three-valued interpretations, and Proposition 2 of
\citeN{sem3neg} gives a similar result which mixes the truth and
information orderings.  However, none of these results hold for logic
programs with negation.  Here we give a new analogous result, using the
information ordering, which holds even when negation is present.
This will be utilised in our discussion of modes in Section\ \ref{sec-modes}.

\begin{proposition} \rm
\label{prop-mod-meet}
If $M$ and $N$ are $\sqsupseteq^\Quad$-models of program $P$ then $M \sqcap N$ is a
$\sqsupseteq^\Quad$-model of $P$.
\end{proposition}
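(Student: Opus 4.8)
The plan is to reduce the statement to the fixed-point characterisation of $\sqsupseteq^\Quad$-models given by Proposition~\ref{prop-fourval-sqsubset}, combined with monotonicity of $\Phi_P$ in the information ordering. Recall that $M \sqcap N$ denotes the pointwise information-meet of the two interpretations; this exists because $\Her \fun \Quad$, ordered by $\sqsubseteq^\Quad$, is a complete lattice, and $M \sqcap N$ is by construction the \emph{greatest} lower bound of $\{M,N\}$ in that ordering. That last fact is the only structural property of $\sqcap$ we will need.

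First I would apply Proposition~\ref{prop-fourval-sqsubset} to restate the hypotheses as $\Phi_P(M) \sqsubseteq M$ and $\Phi_P(N) \sqsubseteq N$. Since $M \sqcap N \sqsubseteq M$ and $M \sqcap N \sqsubseteq N$, monotonicity of $\Phi_P$ yields $\Phi_P(M \sqcap N) \sqsubseteq \Phi_P(M)$ and $\Phi_P(M \sqcap N) \sqsubseteq \Phi_P(N)$. Composing with the hypotheses gives $\Phi_P(M \sqcap N) \sqsubseteq M$ and $\Phi_P(M \sqcap N) \sqsubseteq N$, so $\Phi_P(M \sqcap N)$ is a lower bound of $\{M,N\}$; as $M \sqcap N$ is the greatest lower bound, $\Phi_P(M \sqcap N) \sqsubseteq M \sqcap N$. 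Applying Proposition~\ref{prop-fourval-sqsubset} in the converse direction, $M \sqcap N$ is a $\sqsupseteq^\Quad$-model of $P$.

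The one step that deserves attention is the monotonicity of $\Phi_P$ with respect to $\sqsubseteq^\Quad$. This is already relied upon in the proof of Proposition~\ref{prop-lfp-lmod}, and it holds because evaluation of definition bodies is built from $\land$, $\lor$, $\neg$ and existential quantification, all of which are $\sqsubseteq$-monotone: $\land$ and $\lor$ because $\Quad$ is an interlaced bilattice, and $\neg$ because it fixes $\mathbf{u}$ and $\mathbf{i}$ and swaps $\mathbf{t}$ and $\mathbf{f}$, which one checks directly against the Hasse diagram of Figure~\ref{fig-bilattice}(d). Granting that, there is no genuine obstacle here: the argument is the standard observation that the post-fixed points of a monotone self-map of a complete lattice are closed under arbitrary meets. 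Should one wish to avoid invoking the fixed-point reformulation, the same conclusion follows head-grounding by head-grounding, using $(M \sqcap N)(B) \sqsubseteq M(B) \sqcap N(B)$ (monotonicity of body evaluation) followed by $M(B) \sqcap N(B) \sqsubseteq M(H) \sqcap N(H) = (M \sqcap N)(H)$; but routing through Proposition~\ref{prop-fourval-sqsubset} is cleaner and shorter.
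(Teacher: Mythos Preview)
Your proposal is correct and follows essentially the same route as the paper: translate the model hypotheses into post-fixed-point form via Proposition~\ref{prop-fourval-sqsubset}, use monotonicity of $\Phi_P$ together with $M \sqcap N \sqsubseteq M$ and $M \sqcap N \sqsubseteq N$ to obtain $\Phi_P(M \sqcap N) \sqsubseteq M \sqcap N$, and translate back. Your additional remarks on why $\Phi_P$ is $\sqsubseteq$-monotone and the alternative head-grounding-by-head-grounding argument are sound but go beyond what the paper includes.
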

\begin{proof}
Assume $M$ and $N$ are $\sqsupseteq^\Quad$-models of $P$.
By Proposition~\ref{prop-fourval-sqsubset},
$\Phi_P(M) \sqsubseteq M$ and $\Phi_P(N) \sqsubseteq N$, since $M$ and $N$
are models.  
By monotonicity, $\Phi_P(M \sqcap N) \sqsubseteq \Phi_P(M) \sqsubseteq M$,
and $\Phi_P(M \sqcap N) \sqsubseteq \Phi_P(N) \sqsubseteq N$.
It follows that
$\Phi_P(M \sqcap N) \sqsubseteq M \sqcap N$, 
so by Proposition~\ref{prop-fourval-sqsubset}, 
$M \sqcap N$ is a model of $P$.
\end{proof}
For example, with the models of $P3$ in Table~\ref{fig-sub-models},
$I3' \sqcap I3'' = I3$.
The corresponding result does not hold for $=^\Quad$-models. 
Consider the following program:
\begin{verbatim}
     p :- p.
     q :- q.
     r :- p ; q ; s.
     s :- p ; q ; not r.
\end{verbatim}
Let $M$ be the interpretation
which maps (\texttt{p},\texttt{q},\texttt{r},\texttt{s}) to
(\textbf{t},\textbf{f},\textbf{t},\textbf{t}), respectively, and let $N$
be the interpretation (\textbf{f},\textbf{t},\textbf{t},\textbf{t}).
Both $M$ and $N$ are $=^\Quad$-models.  The meet, $M \sqcap N$, is
(\textbf{u},\textbf{u},\textbf{t},\textbf{t}) but $\Phi_P$ applied to
this interpretation is (\textbf{u},\textbf{u},\textbf{t},\textbf{u}).
So $M \sqcap N$ is a $\sqsupseteq^\Quad$-model but not a $=^\Quad$-model.
(This example also shows that $\Phi_P$, while monotone, is not
in general an increasing function.)

\section{Program analysis}
\label{sec-proganalysis}

This section and the three that follow it present several applications 
of the four-valued semantics we have introduced.
We hope to convince the reader that there are numerous situations
in which four-valued logic is the natural setting for reasoning about
logic programs and their behaviour, and that $\sqsupseteq^\Quad$-models in
particular can play an important role.

Four-valued logic provides a convenient setting for static analysis
of logic programs.
The reason is that program analysis almost always is concerned with
runtime properties that are undecidable, so some sort of
approximation is needed, to guarantee finiteness of analysis.
As an example, we show how the program analysis framework proposed by
\citeN{Mar-Son:jlp92} generates four-valued interpretations of the kind
we have discussed.

Many program analyses for logic programs try to detect how logic
variables are being used or instantiated.
The well-known $T_P$ function and Fitting's $\Phi_P$ function yield 
ground atomic formulas, and so semantic definitions based on these
functions are not ideal as a basis for static analysis which intends 
to express what happens to variables at runtime.
The $s$-semantics \cite{Fal:ssem,Bossi:JLP94} 
is a non-ground version of the $T_P$ semantics.
The $s$-semantics of a program $P$ is a set $S_P$ of possibly
non-ground atoms, with the property that
(a) the ground instances of the atoms in $S_P$ give precisely the least 
Herbrand model of $P$, and 
(b) the computed answer substitutions~\cite{Llo84} for a query $Q$ can 
be obtained by solving $Q$ using the (potentially infinite) set $S_P$.
Letting $\cH$ denote the set of atomic formulas, the $s$-semantics
of $P$ is defined as the least fixed point of an ``immediate
consequences'' operator $T^v_P : \pset{\cH} \rightarrow \pset{\cH}$.
More precisely,
\[
  T^v_P(I) =
      \sset{h\theta} {C\equiv h\ \mathtt{:-}\ b_1,\ldots,b_n \in P,
            \tuple{a_1,\ldots,a_n} \ri_C I, \\
            \theta =
            \ident{mgu}(\tuple{b_1,\ldots,b_n},\tuple{a_1,\ldots,a_n})}
\]
where $\tuple{a_1,\ldots,a_n} \ri_C I$ expresses that $a_1,\ldots,a_n$
are variants of elements of $I$ renamed apart from $C$ and from
each other, and $\ident{mgu}$ gives the most general unifier of two
(sequences of) expressions.
As an example, for the append program,
\begin{verbatim}
     append([],Ys,Ys).
     append([X|Xs],Ys,[X|Zs]) :- append(Xs,Ys,Zs).
\end{verbatim}
the least fixed point of $T^v_P$ is
\[
  \{\mathtt{append}\mbox{\texttt{([}}x_1,\ldots,x_n\mbox{\texttt{]}},\ident{ys},\mbox{\texttt{[}}x_1,\ldots,x_n\mbox{\texttt{|}}\ident{ys}\mbox{\texttt{])}} \mid
       {n \geq 0} \}
\]
\citeN{Cod-Son:JonesFest02} give an account of the role of various
(goal-directed as well as goal-independent) semantics,
including the s-semantics, for the analysis of logic programs.

\citeN{Mar-Son:jlp92} use Fitting's three-valued semantics as a basis
for defining static analyses which over-estimate both a given
program's success set and its finite failure set.
Fitting's $\Phi_P$ operator generates pairs $(S,F)$ of
sets of ground atoms, with the reading that every atom in $S$ succeeds
and every atom in $F$ finitely fails, and it only allows for pairs
that satisfy $S \cup F = \emptyset$.
That is, there are three cases for an atom: 
It can be contained in $S$ (have the value $\mathbf{t}$), it can be
contained in $F$ (have the value $\mathbf{f}$), or it can be
absent from both (have the value $\mathbf{u}$).
In the ``approximate'' version of \citeN{Mar-Son:jlp92},
$S$ and $F$ are allowed to share atoms.
In the parlance of the present paper, 
these atoms are given the value $\mathbf{i}$, though in the context of
analysis it means ``don't know'' rather than ``don't care''.

There are different ways in which we can guarantee finiteness of the
static analysis.
One of the approaches used by \citeN{Mar-Son:jlp92} is to never
generate terms beyond a fixed, finite depth.
We define the \emph{depth} of a variable or a constant to be 1,
and for other terms we define depth inductively:
\[
  \ident{depth}(t) =
    1 + \ident{max}\ \{\ident{depth}(u) \mid u \mbox{ is a proper subterm of } t\}.
\]
The idea in \emph{depth-$k$ analysis} is that a
term with a depth greater than $k$ can be approximated by replacing
each subterm at that depth by a fresh variable.
The pruned term approximates the original in the sense that the
original is one of the pruned term's (possibly numerous) instances.

As an example, the following term, representing the
infinite list of positive odd integers
    \verb![s(0),s(s(s(0))),s(s(s(s(s(0))))),...]!
(using successor notation for non-negative integers)
has the (finite) depth-10 approximation
\begin{verbatim}
     [s(0),s(s(s(0))),s(s(s(s(s(0))))),s(s(s(s(s(s(s(0))))))),
                                          s(s(s(s(s(s(s(s(_))))))))]
\end{verbatim}
This is an imprecise approximation, as for example \verb!s(s(s(s(s(s(s(s(0))))))))!
is an instance,
but it is sufficiently precise to capture aspects of the original list,
such as the fact that it does not contain \verb!s(s(s(s(0))))!, say.

The bottom-up analysis framework proposed by \citeN{Mar-Son:jlp92} 
will mimic Fitting's $\Phi_P$ operator, except that
it generates possibly non-ground atoms, with the understanding
that a non-ground atom represents the set of all its ground instances.
This opens up the possibility that some ground atoms (instances of
non-ground atoms) be classified as both $\mathbf{t}$ and $\mathbf{f}$.
For example, for the program
\begin{verbatim}
     odd(s(0)).
     odd(s(s(N)) :- odd(N).
     p :- p.
\end{verbatim}
it will produce a depth-10 approximation
\[
 \big( 
  \{\mathtt{odd}(\mathtt{s}^j(0)) \mid j \in \{1,3,5,7\} \lor j > 8\},
  \{\mathtt{odd}(\mathtt{s}^j(0)) \mid j \in \{0,2,4,6\} \lor j > 7\}
 \big)
\]
Said differently, it creates a four-valued interpretation in which,
for example,
$\mathtt{p}$ is mapped to $\mathbf{u}$, 
$\mathtt{odd(0)}$ is mapped to $\mathbf{f}$, 
$\mathtt{odd(s(0))}$ is mapped to $\mathbf{t}$, 
and atoms $\mathtt{odd}(\mathtt{s}^j(0))$ are mapped to 
$\mathbf{i}$ for all $j>8$.

The ``approximate'' semantics that underpins the bottom-up analysis 
framework is perfectly aligned with the semantics proposed in 
Section~\ref{sec-generalised}.
A bottom-up analysis of program $P$ is expressed as an interpretation $I =
\lfp(\Phi'_P)$ for some sound approximation $\Phi'_P$ of $\Phi_P$. In the
case where all atoms are ground, $I$ may also be a fixed point of $\Phi_P$,
that is, $I = \Phi_P(I)$ and $I$ is a $=^\Quad$-model.  The definition of
``sound approximation'' (which ensures that the approximation must be
at least as high in the information order as the components it approximates)
implies $\Phi'_P(I) \sqsupseteq \Phi_P(I)$.  In general, since $I =
\Phi'_P(I)$, we have $I \sqsupseteq \Phi_P(I)$ so $I$ is a
$\sqsupseteq^\Quad$-model.

\section{Types and modes}
\label{sec-modes}

We now briefly review the motivation for type and mode systems in logic
programming and show how $\sqsupseteq^\Quad$-models can play a role
in designing and understanding mode systems.
Along the way we propose an expressive mode system, 
and a language of mode annotations to support it.

The lack of a type discipline or similar restrictions on what 
constitutes acceptable Prolog programs means that it is easy for 
programmers to make simple mistakes which
are not immediately detected by the Prolog system.  
A typical symptom
is that the program fails unexpectedly, leading to rather tedious analysis
of the complex execution in order to uncover the mistake.  
One approach to avoid
some runtime error diagnosis is to impose additional discipline on the
programmer, generally restricting programming style somewhat, 
to allow the system to statically classify certain programs as incorrect.
Various systems of ``types'' and ``modes'' have been proposed for this.
An added benefit of some such systems is that they help
make implementations more efficient.  
Here we discuss systems of this kind
at a very high level and argue that four-valued interpretations
potentially have a role in this area, particularly in mode systems such
as that of Mercury \cite{mercury}.

Type systems typically assign a type (say, Boolean, integer, list of
integers) to each argument of each predicate.  This allows each variable
occurrence in a clause to also be assigned a type.  One common error
is that two occurrences of the same variable have different types.
For example, consider a predicate \texttt{head} which is intended to
return the head of a list of integers but is incorrectly defined as:
\verb@head([_|Y],Y)@.  The first occurrence of \texttt{Y} is associated
with the type ``list of integers'' 
and the other is associated with type ``integer''.
If \texttt{head} is called with both arguments instantiated to the
expected types, it must fail.  But \texttt{head} can succeed if it
is called in different ways.  For example, with only the first argument
instantiated it will succeed, albeit with the wrong type for the second
argument (and this in turn may cause a wrong result or failure of a 
computation which calls \texttt{head}).

Type systems can be refined by considering the ``mode'' in which
predicates are called, or dependencies between the types of different
arguments.  This can allow additional classes of errors to be detected.
For example, we can say the first argument of \texttt{head} is expected
to be ``input'' and the second argument can be ``output''.  Alternatively
(but with similar effect), we could say if the first argument is a list
of integers, the second should be an integer.  
To see that mode information transcends type information, consider the
(incorrect) definition \verb@head([_|Y],X)@.
Here there is a consistent assignment of types to variables, 
but it does not satisfy the stipulated mode/type-dependency constraint.
One high level property of several mode systems is that if input
arguments are well typed then output arguments will be well typed for
any successful call.  
In fact, a stronger property is desirable: the whole successful derivation
should be well typed (otherwise we have a very dubious proof).  Typically,
well typed inputs in a clause head imply well typed inputs in the body,
which implies well typed outputs in body, which implies well typed
outputs in the head.  This idea is present in the ``directional types''
concept \cite{Aiken-sas94,boye95}, the mode system of Mercury
\cite{mercury}, and the view of modes proposed in \citeN{modes}.
Here we show the relevance of four-valued interpretations to this idea,
ignoring the details of what constitutes a type (which differs in the
different proposals) and what additional constraints are imposed 
(neither Mercury nor directional types support cyclic dataflow, 
and Mercury has additional interactions between types, 
modes and determinism).

We will present a mode system inspired by that of Mercury.  Mercury
allows types to be defined using \texttt{type} declarations and
declared for predicate arguments using \texttt{pred} declarations; we
adopt this verbatim.  Mercury modes are declared using \texttt{mode}
declarations, which also declare determinism (the range of possible
numbers of solutions).  We propose similar mode declarations which allow
additional refinements.  Determinism information could also be added
but we ignore this aspect here.  Similarly, we ignore issues surrounding
negation.  Mercury also supports an
\texttt{error/1} primitive which results in abnormal termination if
called.  It allows more precise static analysis of modes 
(and of determinism) and we also adopt it.

Type and mode declarations document some aspects of how predicates are
intended to be used and how they are intended to behave.  We define
interpretations which are consistent with these documented intentions.
We assume there is a notion of well typedness for each argument of each
predicate in program $P$.

\begin{definition}[Mode, mode group, mode interpretation and moded
program] \rm
A \emph{mode} for predicate $P$ is an assignment of ``input'' or 
``output'' to each of $P$'s argument positions.
A \emph{mode interpretation} of $P$ with a given mode $m$, 
$\ident{MI}(P, m)$, is a four-valued interpretation $M$
such that the value of (ground) atom $A$ in $M$ is
\begin{itemize}
\item
\textbf{u}, if the predicate is \texttt{error/1}, and otherwise:
\item
\textbf{t}, if all arguments are well-typed,
\item
\textbf{i}, if some input argument (according to $m$) is ill-typed, and
\item
\textbf{f}, if all input arguments are well typed but some output
argument is ill-typed.
\end{itemize}
A \emph{mode group} is a set of modes for a predicate.  A \emph{mode
interpretation} of $P$ with a mode group $\{m_1 \ldots m_k\}$ is
$\bigsqcap_{1 \leq i \leq k} \ident{MI}(P, m_i)$.  A \emph{moded program}
is a program with a mode group defined for each predicate.  Mode
interpretations for moded programs are defined in the obvious way.
\end{definition}

Note that the assignments \textbf{u} and \textbf{t} are independent 
of the mode(s).

For a mode group, an atom is \textbf{i} where there is no mode in
the group for which all input are well typed.  Changing the mode(s) of a
predicate so it can be used in more flexible ways corresponds to changing
the truth value of some atoms from \textbf{i} to \textbf{f}.  This makes
the mode interpretation more precise (lower in the information order).
Asymmetry between \textbf{t} and \textbf{f} arises because mode
analysis must ``assume the worst'' with respect what can succeed, local
variables in clauses are existentially quantified in the body, and
negated literals do not bind variables which appear in the rest of the
clause body. 
Figure \ref{fig-nrev} illustrates the syntax we use for defining the
modes of a predicate---mode groups are formed using the keyword ``and''.
The Mercury equivalent is to use two separate mode declarations.

\begin{figure}
\begin{verbatim}

                :- pred rev(list(T), list(T)).
                :- mode rev(in, out) and (out, in).
                rev([], []).
                rev([H|T], R) :- rev(T, L), append(L, [H], R).
\end{verbatim}
\caption{Naive reverse with a group of two modes\label{fig-nrev}}
\end{figure}

Mercury uses the notion of an \emph{implied mode}---a mode $m$
implies all modes whose output arguments are a subset of those in $m$.
For example, the mode \texttt{(in,in)} is implied by either of the two
declared modes for \texttt{rev/2}.  
The next proposition says that mode 
interpretations are invariant under addition of implied modes.

\begin{proposition}
\label{prop-imp-mode}
The mode interpretation of predicate $P$ with modes $\{m_1 \ldots m_k\}$
is the same as that for $P$ with modes $\{m_1 \ldots m_k, m'\}$ if
the outputs of $m'$ are a subset of the outputs of some $m_j$, $1 \leq j
\leq k$.
\end{proposition}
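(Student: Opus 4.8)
The plan is to reduce the proposition to a pointwise comparison in the information ordering between two single-mode interpretations, and then to settle that comparison by a short case analysis on a ground atom.

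First I would unfold the definition of a mode interpretation for a mode group: $\ident{MI}(P,\{m_1,\ldots,m_k,m'\}) = \bigl(\bigsqcap_{i=1}^{k}\ident{MI}(P,m_i)\bigr) \sqcap \ident{MI}(P,m')$. Writing $X = \bigsqcap_{i=1}^{k}\ident{MI}(P,m_i)$, it is enough to show $X \sqsubseteq \ident{MI}(P,m')$, since in a meet-semilattice $x \sqcap y = x$ whenever $x \sqsubseteq y$. As $X$ is a meet in which $\ident{MI}(P,m_j)$ occurs as one of the factors, $X \sqsubseteq \ident{MI}(P,m_j)$, so it suffices to prove the stronger fact $\ident{MI}(P,m_j) \sqsubseteq \ident{MI}(P,m')$ in the pointwise order on $\Her \fun \Quad$.

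Next I would restate the hypothesis in the form that will actually be used: since the set of argument positions of $P$ is fixed, $\ident{out}(m') \subseteq \ident{out}(m_j)$ is the same as $\ident{in}(m_j) \subseteq \ident{in}(m')$ --- passing from $m_j$ to $m'$ can only turn output positions into input positions. I would then fix an arbitrary ground atom $A$ with predicate symbol $P$ and compare $\ident{MI}(P,m_j)(A)$ and $\ident{MI}(P,m')(A)$ case by case. If $P$ is \texttt{error/1}, both are $\mathbf{u}$; if every argument of $A$ is well typed, both are $\mathbf{t}$. Otherwise some argument of $A$ is ill typed, and I would split on whether some input position of $m_j$ carries an ill-typed argument. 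If it does, that position also lies in $\ident{in}(m')$, so both interpretations assign $\mathbf{i}$. If it does not, then $\ident{MI}(P,m_j)(A) = \mathbf{f}$ (all inputs of $m_j$ well typed, some output of $m_j$ ill typed), and there are two possibilities for $m'$: either some input of $m'$ is ill typed, so $\ident{MI}(P,m')(A) = \mathbf{i}$ and $\mathbf{f} \sqsubseteq \mathbf{i}$; or every input of $m'$ is well typed, in which case the ill-typed argument of $A$ must occupy an output position of $m'$, so $\ident{MI}(P,m')(A) = \mathbf{f}$. In every case $\ident{MI}(P,m_j)(A) \sqsubseteq \ident{MI}(P,m')(A)$, which gives the required pointwise inequality and hence the proposition.

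The only non-bookkeeping ingredient is the step $\mathbf{f} \sqsubseteq \mathbf{i}$, which is simply the shape of the bilattice $\Quad$ in Figure~\ref{fig-bilattice}(d). I do not expect a genuine obstacle here: the main care needed is to keep the input/output accounting straight and to remember that enlarging a mode group, or replacing a mode by one with fewer outputs, can only move argument positions from ``output'' to ``input'', never back.
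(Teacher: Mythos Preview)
Your proof is correct and follows essentially the same route as the paper: both reduce the claim to the single inequality $\ident{MI}(P,m_j) \sqsubseteq \ident{MI}(P,m')$ (equivalently, $\ident{MI}(P,m') \sqsupseteq \ident{MI}(P,m_j)$), then absorb $\ident{MI}(P,m')$ into the meet. The only difference is that the paper asserts this inequality in one line (``each input argument of $m_j$ is an input argument of $m'$''), whereas you spell out the pointwise case analysis on a ground atom; your version is simply a more explicit rendering of the same argument.
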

\begin{proof}
For each $j$ we have that $\ident{MI}(P, m') \sqsupseteq \ident{MI}(P, m_j)$, 
since each input argument of $m_j$ is an input argument of $m'$.
So $\ident{MI}(P, m') \sqsupseteq \bigsqcap_{1 \leq j \leq k} \ident{MI}(P, m_j)$
and thus
$\ident{MI}(P, m') \sqcap \bigsqcap_{1 \leq j \leq k} \ident{MI}(P, m_j) =
\bigsqcap_{1 \leq j \leq k} \ident{MI}(P, m_j)$.
\end{proof}

\noindent
If a mode interpretation of a moded program $P$ is a
$\sqsupseteq^\Quad$-model
this gives us the high level property discussed earlier:
\begin{proposition} \rm
\label{lem-mode-head}
If a mode interpretation $M$ of a moded program $P$ is a
$\sqsupseteq^\Quad$-model and $A$ is a
successful atom which, for some mode of the predicate, has all input
arguments well typed, then $A$ has all arguments well typed.
\end{proposition}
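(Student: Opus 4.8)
The plan is to trace a successful derivation of the atom $A$ through the program, using the $\sqsupseteq^\Quad$-model property at each resolution step to push well-typedness information from heads to bodies and back. The key insight is that the truth values in a mode interpretation encode exactly the typing discipline: an atom is \textbf{f} precisely when its inputs are well typed but some output is not. So the contrapositive of ``$A$ has all arguments well typed'' is ``$A$ is \textbf{f} in $M$'' (given that $A$'s inputs are well typed for the chosen mode, and $A$ is not an \texttt{error/1} atom, since an \texttt{error/1} atom cannot succeed). Hence it suffices to show that a successful atom whose inputs are well typed cannot be \textbf{f} in $M$.

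First I would invoke Corollary~\ref{thm-sound} (soundness): since $M$ is a $\sqsupseteq^\Quad$-model of $P$, no \textbf{f} atom in $M$ can succeed, and no \textbf{u} atom in $M$ can succeed. So a successful atom $A$ must be \textbf{t} or \textbf{i} in $M$. If $A$ is \textbf{t}, then by the definition of a mode interpretation all arguments of $A$ are well typed, and we are done. The remaining case is $A$ being \textbf{i}. But here the hypothesis bites: we are told that for some mode of the predicate, $A$ has all input arguments well typed. By the definition of $\ident{MI}(P,m)$ for a single mode, if all input arguments (according to that mode $m$) are well typed and $A$ is not an \texttt{error/1} atom, then $\ident{MI}(P,m)$ assigns $A$ either \textbf{t} (all arguments well typed) or \textbf{f} (some output ill typed)---never \textbf{i}. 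Since the mode interpretation for the mode group is $\bigsqcap_i \ident{MI}(P,m_i)$ and this particular $\ident{MI}(P,m)$ gives $A$ a value in $\{\textbf{t},\textbf{f}\}$, the meet over the group gives $A$ a value $\sqsubseteq$ that value, hence in $\{\textbf{u},\textbf{t},\textbf{f}\}$ and in particular not \textbf{i}. (The \texttt{error/1} case is excluded throughout because \texttt{error/1} atoms are \textbf{u}, and \textbf{u} atoms cannot succeed by Corollary~\ref{thm-sound}.) Combining: $A$ successful forces $A \in \{\textbf{t},\textbf{i}\}$; the input-well-typedness hypothesis forces $A \notin \{\textbf{i},\textbf{u}\}$ in the mode interpretation; so $A$ is \textbf{t}, and therefore all arguments of $A$ are well typed.

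The main obstacle, and the step requiring the most care, is the interaction between the mode-group meet and the per-mode values: one must check that taking $\bigsqcap$ over the mode group cannot turn a $\{\textbf{t},\textbf{f}\}$ value (from the favourable mode) into \textbf{i}. This is immediate from the shape of $\Quad$---\textbf{i} is the top element in the information order, so a meet can only produce \textbf{i} if \emph{every} argument to the meet is \textbf{i}---but it is worth spelling out, because it is the one place where the structure of the bilattice (rather than just the soundness corollary) does real work. Everything else is unwinding definitions; the soundness direction is supplied wholesale by Corollary~\ref{thm-sound}.
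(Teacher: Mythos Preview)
Your proof is correct and follows essentially the same route as the paper: invoke Corollary~\ref{thm-sound} to force $M(A)\in\{\mathbf{t},\mathbf{i}\}$, then use the definition of mode interpretation to rule out $\mathbf{i}$ (since for the favourable mode the single-mode value lies in $\{\mathbf{t},\mathbf{f}\}$), concluding $M(A)=\mathbf{t}$ and hence all arguments are well typed. Two minor remarks: your opening paragraph's plan of ``tracing a successful derivation through the program'' is not what you actually do (nor need to do---that inductive argument is reserved for Theorem~\ref{thm-mode-deriv}); and your careful treatment of the meet over the mode group is a genuine improvement in explicitness over the paper, which simply writes ``by the definition of mode interpretations'' and leaves the reader to unpack how the per-mode value constrains the group meet.
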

\begin{proof}
By Corollary \ref{thm-sound}, since $M$ is a $\sqsupseteq^\Quad$-model
and $A$ succeeds,
$A$ must be \textbf{t} or \textbf{i} in $M$.  By the definition of mode
interpretations, since $A$ is not \textbf{f} and all input arguments
are well typed for some mode, all output arguments must be well typed
as well.
\end{proof}
For the stronger property to hold (the whole derivation being well-typed),
the mode interpretation being a $\sqsupseteq^\Quad$-model is not
sufficient.  A definition can have a \textbf{t} head and a body which is a
disjunction of a \textbf{t} atom which loops and an \textbf{i} atom which
succeeds: although the interpretation is a $\sqsupseteq^\Quad$-model, the
only successful derivation uses the inadmissible disjunct.  To prevent
such cases we impose an extra condition on each disjunct in the body of
a definition (or each clause in a Prolog program) rather than the body
of the definition as a whole:

\begin{definition}[Well-moded] \rm
A moded program $P$ is \emph{well-moded} with respect to
mode interpretation $M$ if $M$
is a $\sqsupseteq^\Quad$-model of $P$ and for each head grounding of a
definition $(H,\exists W[C_1 \lor \cdots \lor C_k])$ where $M(H) =
\mathbf{t}$, $M(\exists W ~ C_j) \neq \mathbf{i}$, $1 \leq j \leq k$.
\end{definition}

\noindent
In practice, it seems that the additional constraint rarely makes a
difference.  In the examples we discuss below, whenever the interpretation
is a $\sqsupseteq^\Quad$-model, the program is well-moded with respect
to the interpretation.

\begin{lemma} \rm
\label{lem-mode-body}
If program $P$ is well-moded, with mode interpretation $M$ and
atom $A$, with $M(A) = \mathbf{t}$, succeeds, then
$A$ is well typed and there is a ground
predicate definition instance $(A, \exists W[C_1 \lor \cdots \lor C_k])$
such that all literals in some $C_j$ succeed and are
assigned \textbf{t} and all positive literals in $C_j$ are well typed.
\end{lemma}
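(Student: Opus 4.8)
The plan is to prove Lemma~\ref{lem-mode-body} by combining Corollary~\ref{thm-sound} with the definition of well-modedness and a careful analysis of how truth values propagate through a disjunction and conjunction. The statement asks for two things: that $A$ is well typed, and that there is a ground definition instance with a ``good'' disjunct $C_j$ (all its literals succeed and are \textbf{t}, positive literals well typed). I would carry this out in three stages, working top-down from the head $A$ into the body.

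First, establishing that $A$ is well typed: since $M(A) = \mathbf{t}$ and $M$ is a mode interpretation, the definition of $\ident{MI}$ forces all arguments of $A$ to be well typed (the value \textbf{t} is assigned precisely when all arguments are well-typed, independent of mode). This is essentially immediate from unwinding the definition. Next, I would use the fact that $M$ is a $\sqsupseteq^\Quad$-model together with $M(A) = \mathbf{t}$: by Proposition~\ref{prop-fourval-sqsubset}, $\Phi_P(M) \sqsubseteq M$, so for the head grounding $(A, \exists W[C_1 \lor \cdots \lor C_k])$ we get $M(\exists W[C_1 \lor \cdots \lor C_k]) \sqsubseteq M(A) = \mathbf{t}$, hence the body is \textbf{t} (it cannot be \textbf{u} because, as we argue next, $A$ succeeds, so something operationally happens; more directly, $\mathbf{t}$'s only $\sqsubseteq$-predecessor is $\mathbf{u}$, and we will rule that out via the disjunct condition). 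Actually cleaner: since $A$ succeeds operationally, by the usual soundness direction for success there must be a clause/disjunct whose body literals all succeed; call it $C_j$. The well-modedness condition gives $M(\exists W[C_j]) \neq \mathbf{i}$, and since this disjunct contributes to the body being true, $M(\exists W[C_j]) = \mathbf{t}$.

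Then, having fixed the ground instance and disjunct $C_j = L_1 \land \cdots \land L_n$ with $M(\exists W[C_j]) = \mathbf{t}$: by the definition of ``made true'' for conjunctions, some ground instantiation of the local variables $W$ makes every $L_i$ true in $M$, i.e. $M(L_i) = \mathbf{t}$ for all $i$ (a literal made true and not false under $M$ has value \textbf{t}; and a conjunction is \textbf{t} only if each conjunct is made true and the conjunction is not made false, which combined with the connective tables rules out any $L_i$ being \textbf{i}---here the well-moded condition $M(\exists W[C_j]) \neq \mathbf{i}$ is doing real work). For each positive literal $L_i = B_i$, $M(B_i) = \mathbf{t}$ forces $B_i$ well typed by the mode-interpretation definition, exactly as for $A$. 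Finally, each literal with $M$-value \textbf{t} succeeds: this is Corollary~\ref{thm-sound} again in the contrapositive---a \textbf{t} atom cannot finitely fail---but to get \emph{succeeds} rather than merely \emph{does not finitely fail} I would invoke that we are in the idealized operational setting (fair search, non-floundering) assumed throughout Section~\ref{sec-generalised}, or appeal to the standard result that atoms true in $\lfp(\Phi_P)$ succeed together with Theorem~\ref{thm-sound-ge} giving $M \sqsupseteq^\Quad \lfp(\Phi_P)$; the negative literals are handled symmetrically via finite failure of the negated atom.

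The main obstacle I anticipate is the gap between ``$M(L_i) = \mathbf{t}$'' and ``$L_i$ succeeds,'' and dually for negation---this is where one must be precise about which operational model is in force and cannot simply cite Corollary~\ref{thm-sound}, which only rules out \emph{finite failure} of \textbf{t} atoms, not nontermination. The natural move is to observe that $M(L_i) = \mathbf{t}$ together with $M \sqsupseteq^\Quad \lfp(\Phi_P)$ forces $L_i$ to be \textbf{t} \emph{in the least model} as well (the only elements $\sqsubseteq \mathbf{t}$ are $\mathbf{u}$ and $\mathbf{t}$, and if it were \textbf{u} in $\lfp(\Phi_P)$ then $L_i$ would loop, contradicting that the clause body is part of a successful derivation of $A$); atoms that are \textbf{t} in $\lfp(\Phi_P)$ succeed by Kunen's result. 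A secondary subtlety is justifying that some disjunct's body actually succeeds given only that $A$ succeeds---this is the operational meaning of a successful derivation unfolding through the (single, completed) clause for the predicate, so it is really a statement about the proof tree rather than about $M$, and I would state it as such rather than trying to derive it from the semantics.
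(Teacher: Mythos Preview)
Your overall strategy matches the paper's, but you take a detour that creates the very ``main obstacle'' you worry about, and your resolution of it does not work. Having chosen $C_j$ as the disjunct with a grounding of $W$ under which every literal \emph{succeeds}, you then pass to $M(\exists W[C_j]) = \mathbf{t}$ and from that pick \emph{some} grounding of $W$ making the conjunction true in $M$. These two groundings need not coincide. For the second one you have no operational information, so you are forced to argue from $M(L_i)=\mathbf{t}$ back to ``$L_i$ succeeds''; your proposed fix (``if it were $\mathbf{u}$ in $\lfp(\Phi_P)$ then $L_i$ would loop, contradicting that the clause body is part of a successful derivation of $A$'') fails precisely because this grounding is not the one used in the successful derivation of $A$. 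Separately, $M(\exists W[C_j]) = \mathbf{t}$ does not even guarantee a grounding in which every $L_i$ is $\mathbf{t}$: the witness for ``made true'' may have some $L_i=\mathbf{i}$, while ``not made false'' may be witnessed by a different grounding altogether.

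The paper avoids all of this by never leaving the operationally successful grounding. Its literals succeed by construction, so the clause ``all literals in some $C_j$ succeed'' is already in hand and there is nothing to prove on that front. Corollary~\ref{thm-sound}, applied literal by literal (a succeeding atom cannot be $\mathbf{f}$ or $\mathbf{u}$; a finitely failed atom underlying a successful negative literal cannot be $\mathbf{t}$ or $\mathbf{u}$), forces each literal into $\{\mathbf{t},\mathbf{i}\}$; hence the ground conjunction $C_j$ has value in $\{\mathbf{t},\mathbf{i}\}$; well-modedness rules out $\mathbf{i}$ for the disjunct, giving $\mathbf{t}$; and since a conjunction is $\mathbf{t}$ only when every conjunct is $\mathbf{t}$, each literal is $\mathbf{t}$. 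The direction throughout is success $\Rightarrow$ $M$-value, never the reverse, so no appeal to idealized fairness or to $\lfp(\Phi_P)$ beyond what Corollary~\ref{thm-sound} already packages is needed.
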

\begin{proof}
Since $M$ is a mode interpretation and $M(A) = \mathbf{t}$,
all arguments are well typed.
A successful disjunct $C_j$ must exist; it cannot be \textbf{i} since
$P$ is well-moded, so it must be \textbf{t} (only \textbf{i} and
\textbf{t} disjuncts can succeed, by Corollary
\ref{thm-sound}).  Similarly, no literal in $C_j$ can be \textbf{u}, so
all literals in $C_j$ must be \textbf{t}, thus each positive literal in
$C_j$ must be well typed.
\end{proof}

\begin{theorem} \rm
\label{thm-mode-deriv}
If $P$ is a well-moded program
and $A$ is a successful atom which is \textbf{t} in the mode
interpretation of $P$, then there is successful derivation of $A$ in
which all successful atoms are well typed.
\end{theorem}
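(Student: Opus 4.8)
The plan is to prove the statement by induction on the structure of a successful SLDNF derivation of $A$, using Lemma~\ref{lem-mode-body} at each step to descend from a \textbf{t} atom to a well-typed \textbf{t} disjunct whose literals are all \textbf{t} (and whose positive literals are well typed). The key point to set up first is what ``successful derivation'' means here: a successful atom $A$ has a finite proof tree in which each node is an atom, the children of an atom come from (a ground instance of) a successful clause body, and negative literals are handled by finite failure of the corresponding atom. Since $M(A) = \mathbf{t}$, Lemma~\ref{lem-mode-body} immediately gives that $A$ is well typed and that there is a ground definition instance $(A,\exists W[C_1\lor\cdots\lor C_k])$ with some disjunct $C_j$ that succeeds, is assigned \textbf{t}, and whose positive literals are all well typed; moreover every literal of $C_j$ is \textbf{t}.

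First I would fix the induction on the height (or size) of the successful derivation of $A$. In the base case the body disjunct $C_j$ contains only equality literals (or is otherwise trivially derivable); $A$ itself is well typed by Lemma~\ref{lem-mode-body}, so there is nothing further to check. For the inductive step, take the disjunct $C_j$ given by Lemma~\ref{lem-mode-body}. Each \emph{positive} literal $B$ in $C_j$ is a successful atom with $M(B)=\mathbf{t}$ (from the lemma) and is well typed, and its subderivation is strictly smaller, so by the induction hypothesis $B$ has a successful derivation in which every successful atom is well typed. Splicing these subderivations under $A$ yields a derivation of $A$ in which $A$ and all atoms appearing positively at this level are well typed, and recursively all successful atoms below them are too. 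For the \emph{negative} literals $\lnot B$ in $C_j$: these do not contribute successful atoms to the derivation — they succeed by $B$ finitely failing — so they impose no well-typedness obligation on ``successful atoms.'' (If one wanted the whole derivation including the finitely failed sub-trees to be well typed, more would be needed, but the theorem only asks about successful atoms, so negative literals can be dispatched by the observation that $M(B)=\mathbf{f}$ and a finitely-failed atom contributes no successes.)

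The main obstacle I anticipate is bookkeeping rather than a deep difficulty: making precise the correspondence between the ``ground predicate definition instance'' supplied by Lemma~\ref{lem-mode-body} and an actual SLDNF derivation step, so that the spliced-together subderivations genuinely form one derivation of $A$, and so that ``all successful atoms are well typed'' is interpreted consistently (it should quantify over every atom that is selected and succeeds anywhere in the tree, at every depth). One must also be careful that the induction measure is well-founded — a successful derivation is finite, so induction on its size works, but one should note that the grounding instances are chosen along the actual successful computation so the subderivations really are smaller. Once those points are nailed down, the argument is a routine tree induction feeding off Lemma~\ref{lem-mode-body} and Corollary~\ref{thm-sound}. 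I would close by remarking that $A$ being \textbf{t} (rather than \textbf{i}) in the mode interpretation is exactly what is needed to keep the induction inside the \textbf{t} world: the well-modedness condition forbids a \textbf{t} head from being justified by an \textbf{i} disjunct, which is precisely the pathology the hypothesis rules out.
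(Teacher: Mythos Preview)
Your proposal is correct and follows exactly the approach the paper takes: the paper's proof is the single line ``By induction on the depth of the proof and Lemma~\ref{lem-mode-body},'' and you have simply unpacked that induction in detail. The additional bookkeeping you flag (grounding, handling of negative literals, well-foundedness of the induction measure) is reasonable but is left implicit in the paper.
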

\begin{proof}
By induction on the depth of the proof and Lemma~\ref{lem-mode-body}.
\end{proof}

\noindent
Checking that a mode interpretation is a $\sqsupseteq^\Quad$-model
(and the additional constraint holds)
requires the kind of analysis used in other forms of mode checking.
For example, consider again Figure~\ref{fig-nrev}.
Assume the recursive clause for \texttt{rev/2}
uses mode \texttt{(in,out)} and assume \texttt{append/3} has
mode \texttt{(in,in,out)}.  Mode analysis intuitively reasons that if
\texttt{rev/2} is called with the first argument well typed (\verb@[H|T]@
is a list), the recursive call will be called with its input argument
well typed (\texttt{T} is a list), thus in any successful call its
output argument will be well typed (\texttt{L} is a list), the input
arguments to \texttt{append/3} will be well typed so its argument will
be well typed (\texttt{R} is a list), so the head will be well typed.
In other words, if we assume the head is \textbf{t} or \textbf{f},
we can find an instance of the body which is \textbf{t} and the head
must be \textbf{t}.  Thus there are no head clause instances of the
form \textbf{t}\texttt{:-}\textbf{i}, \textbf{f}\texttt{:-}\textbf{i},
\textbf{f}\texttt{:-}\textbf{t} or \textbf{t}\texttt{:-}\textbf{f}
(and the head cannot be \textbf{u} since the predicate is not 
\texttt{error/1}), 
so the mode interpretation is a $\sqsupseteq^\Quad$-model.

\begin{figure}
\begin{verbatim}

     % Extracts head of list.  Has exactly one solution for all
     % (admissible) calls.  nonempty_head([],_) is inadmissible.
     :- pred nonempty_head(list(T), T).
     :- mode nonempty_head(in, out).
     nonempty_head([H|_], H).

     % Extracts head of list.  Has exactly one solution for all
     % (normally terminating) calls.  checked_head([],_) throws an error.
     :- pred checked_head(list(T), T).
     :- mode checked_head(in, out).
     checked_head([], _) :- error("head of empty list").
     checked_head([H|_], H).
\end{verbatim}
\caption{Two versions of head for non-empty lists\label{fig-head}}
\end{figure}

Sometimes the most natural intended interpretation has certain atoms
assigned \textbf{i}, but static mode analysis is unable to conclude
such atoms are never called.  In Mercury this issue arises more with
determinism, when analysis is unable to conclude that a given
atom always succeeds.
Using \texttt{error/1} allows us to make static analysis more flexible
by making the code more verbose and introducing some runtime checking.
As an example of this, consider extracting the head of a list.  
There are situations
where we expect this operation to be called on non-empty lists only
(so the computation has exactly one solution, or is ``det'' in Mercury
terminology).  Figure \ref{fig-head} gives two codings.  The first
cannot easily be checked statically.  Our proposed mode system has
no way to declare well-typed atoms as inadmissible, so clauses of the
form \textbf{t}\texttt{:-}\textbf{i} in our intended interpretation may
be possible.  Similarly, Mercury cannot determine the code is ``det''.
The second is acceptable for Mercury and also safe for our mode system.
By changing the intended interpretation of \verb@checked_head([],_)@ from
\textbf{i} to \textbf{u}, the mode interpretation (where it is assigned
\textbf{t}) becomes a safe approximation to the intended interpretation.
Thus, having mode interpretations that can distinguish \textbf{i} 
from \textbf{u} can be helpful.

There is one more feature of the mode system we propose---allowing more
than one mode group per predicate.  This feature is not supported in any
other mode systems.  Separate mode groups are declared using the keyword
``also''.  In the following example each group has a single mode, but in
general we can use a mixture of ``and'' and ``also'', with the former
binding more tightly.

\begin{verbatim}
     :- mode rev(in, out) also (out, in).
\end{verbatim}

\noindent
For the external view of a predicate, for example, how modes approximate
the behaviour of a non-recursive call to a predicate, ``also'' is
treated identically to ``and'' (the meet of the mode interpretations is
used).  However, for the internal view of a predicate and how its
definition is checked for well-modedness, we impose a
stronger constraint.  The definition must be well-moded
with respect to \emph{each} interpretation corresponding to a mode group.
This implies it is also well moded with respect to the meet (we give the
case of two mode groups; the generalisation to $N$ mode groups is
straightforward):

\begin{proposition}
\label{prop-meet-well-moded}
If predicate $P$ is well-moded with respect to mode interpretations
$\ident{MI}_1$ and $\ident{MI}_2$ then it is well-moded with respect to
$M$, where $M = \ident{MI}_1 \sqcap \ident{MI}_2$.
\end{proposition}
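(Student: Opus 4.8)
The plan is to verify the two defining clauses of \emph{well-moded} directly for $M = \ident{MI}_1 \sqcap \ident{MI}_2$, treating $P$ as the moded program in question. The model clause is immediate: $\ident{MI}_1$ and $\ident{MI}_2$ are each $\sqsupseteq^\Quad$-models of $P$ (they must be, since $P$ is well-moded with respect to each), so by Proposition~\ref{prop-mod-meet} their information-order meet $M$ is again a $\sqsupseteq^\Quad$-model of $P$. That disposes of the first half with no work.

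For the disjunct clause, I would fix a head grounding $(H,\exists W[C_1 \lor \cdots \lor C_k])$ with $M(H) = \mathbf{t}$. Since the meet of interpretations is pointwise, $M(H) = \ident{MI}_1(H) \sqcap \ident{MI}_2(H)$, and in $\Quad$ the information-order meet of two truth values equals $\mathbf{t}$ only when at least one of them is already $\mathbf{t}$ (the other being $\mathbf{t}$ or $\mathbf{i}$); so at least one of $\ident{MI}_1(H),\ident{MI}_2(H)$ equals $\mathbf{t}$, say without loss of generality $\ident{MI}_1(H) = \mathbf{t}$. Well-modedness of $P$ with respect to $\ident{MI}_1$ then gives $\ident{MI}_1(\exists W ~ C_j) \neq \mathbf{i}$ for every $j$. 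Now $M \sqsubseteq \ident{MI}_1$, and evaluation of a body formula in an interpretation is monotone with respect to $\sqsubseteq$ (the connectives $\land,\lor$ are $\sqsubseteq$-monotone because the bilattice is interlaced, $\lnot$ is $\sqsubseteq$-monotone by inspection of its table, and existential quantification is $\sqsubseteq$-monotone as well), hence $M(\exists W ~ C_j) \sqsubseteq \ident{MI}_1(\exists W ~ C_j)$. Because $\mathbf{i}$ is the top element of $(\Quad,\sqsubseteq)$, any value lying $\sqsubseteq$-below something other than $\mathbf{i}$ is itself other than $\mathbf{i}$; therefore $M(\exists W ~ C_j) \neq \mathbf{i}$ for every $j$, which is precisely the disjunct clause. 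Thus $P$ is well-moded with respect to $M$. The generalisation to $N$ mode groups is the same argument iterated: the $N$-fold meet of $\sqsupseteq^\Quad$-models is a $\sqsupseteq^\Quad$-model by repeated use of Proposition~\ref{prop-mod-meet}, and $\bigsqcap_i \ident{MI}_i(H) = \mathbf{t}$ still forces some $\ident{MI}_i(H) = \mathbf{t}$.

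There is no genuine obstacle here; the one point to stay careful about is to rely only on \emph{monotonicity} of body evaluation under the pointwise information ordering, rather than on the tempting but stronger claim that evaluation distributes over the meet of interpretations (which, because of the infinite join hidden in $\exists$, would need the full distributivity of the bilattice and some care about limits). Monotonicity together with $\mathbf{i}$ being $\sqsubseteq$-maximal is exactly enough to conclude $M(\exists W ~ C_j) \neq \mathbf{i}$, and keeps the argument short.
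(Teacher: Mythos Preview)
Your proposal is correct and follows essentially the same route as the paper's proof: first invoke Proposition~\ref{prop-mod-meet} for the model clause, then observe that $M(H)=\mathbf{t}$ forces $\ident{MI}_1(H)=\mathbf{t}$ or $\ident{MI}_2(H)=\mathbf{t}$, and use well-modedness of the corresponding $\ident{MI}_i$ to rule out $\mathbf{i}$ for each disjunct in $M$. The only difference is that you spell out the final step via $\sqsubseteq$-monotonicity of body evaluation and $\mathbf{i}$ being the information-order top, whereas the paper leaves that implication (``$\ident{MI}_1(C_j)\neq\mathbf{i}$ or $\ident{MI}_2(C_j)\neq\mathbf{i}$, so $M(C_j)\neq\mathbf{i}$'') unjustified; your version is simply a more careful rendering of the same argument.
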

\begin{proof}
$M$ is a $\sqsupseteq^\Quad$-model, by Proposition \ref{prop-mod-meet}.
Consider a head grounding of the definition of $P$, 
$(H,\exists W[C_1 \lor \cdots \lor C_k])$.
If $M(H) = \mathbf{t}$ then $\ident{MI}_1(H) = \mathbf{t}$ or
$\ident{MI}_2(H) = \mathbf{t}$, so
$\ident{MI}_1(C_j) \neq \mathbf{i}$ or
$\ident{MI}_2(C_j) \neq \mathbf{i}$, so
$M(C_j) \neq \mathbf{i}$, for $1 \leq j \leq k$.
\end{proof}
For sets of mutually recursive predicates there must be some set of mode
interpretations $S$, the predicates must be well-moded with respect to
each element of $S$, and $S$ must have a mode interpretation for each
mode group in each of the predicates (each mode group of a predicate
must be ``covered'' by at least one element of $S$).

\begin{figure}
\begin{verbatim}
            :- pred rev_ra(list(T), list(T)).
            :- mode rev_ra(in, out) also (out, in).
            rev_ra([], []).
            rev_ra([H|T], R) :- rev_rb(L, T), append(L, [H], R).

            :- pred rev_rb(list(T), list(T)).
            :- mode rev_rb(in, out) also (out, in).
            rev_rb([], []).
            rev_rb([H|T], R) :- rev_ra(L, T), append(L, [H], R).
\end{verbatim}
\caption{Mutually recursive reverse with complementary modes\label{fig-rrev}}
\end{figure}

Consider the naive reverse example with the mode declaration above
and assume \texttt{append/3} has modes \texttt{(in,in,out)} and
\texttt{(out,in,in)}.  There are two $\sqsupseteq^\Quad$-models
corresponding to the mode interpretations for modes
\texttt{(in,out)} and \texttt{(out,in)}, respectively, and a third
$\sqsupseteq^\Quad$-model which is the meet.  However, if we swap the
arguments in the recursive call to \texttt{rev/2}, the meet is still
a $\sqsupseteq^\Quad$-model but the other two mode interpretations are
\emph{not} $\sqsupseteq^\Quad$-models.  Calling \texttt{rev/1} in mode
\texttt{(in,out)} requires a recursive call in mode \texttt{(out,in)}
and vice versa, so one mode alone is not sufficient and mode checking
with the ``also'' mode declaration would fail.  Mode declarations with
``also'' are stronger than those with ``and''; they tell us more about the
\emph{set} of $\sqsupseteq^\Quad$-models.  The additional expressiveness
can be used to detect more errors (for example, if the arguments were
swapped accidentally and the stronger mode declaration was used).

The version of reverse with the arguments swapped can be specialised
to two mutually recursive predicates, shown in Figure \ref{fig-rrev}.
This has three $\sqsupseteq^\Quad$-models: one with mode \texttt{(in,out)}
for \texttt{rev\_ra/2} and mode \texttt{(out,in)} for \texttt{rev\_rb/2},
another with mode \texttt{(out,in)} for \texttt{rev\_ra/2} and mode
\texttt{(in,out)} for \texttt{rev\_rb/2}, and the third is the meet.
The program is well-moded with respect to all three and each mode group
of each predicate is covered by one of these models.

\begin{figure}
\begin{verbatim}
    :- type b ---> t ; f.          % Boolean
    :- type k3 ---> t3 ; f3 ; i3.  % Kleene
    
    % 'and' where third truth value means maybe true, maybe false
    :- pred and3(k3, b, b).
    :- mode and3(in, in, out).
    and3(i3, _, f).
    and3(i3, t, t).
    and3(f3, _, f).
    and3(t3, B, B).
    
    % 'and3' of each value in a list
    :- pred fold_and3(list(k3), b).
    :- mode fold_and3 (in, out) and (in, in).  % latter is redundant
    fold_and3([], t).
    fold_and3([f3|_], f).
    fold_and3([B3|B3s], R) :- fold_and3(B3s, R1), and3(B3, R1, R).
    
    :- pred fold_and3a(list(k3), b).
    :- mode fold_and3a (in, out) also (in, in).
    fold_and3a([], t).
    fold_and3a([f3|_], f).
    fold_and3a([i3|_], f).
    fold_and3a([i3|B3s], t) :- fold_and3a(B3s, t).
    fold_and3a([t3|B3s], R) :- fold_and3a(B3s, R).
\end{verbatim}
\caption{Illustration of ``and'' versus ``also'' in modes\label{fig-and3}}
\end{figure}

Figure \ref{fig-and3} gives another example of the additional expressive
power.  The mode declared for \verb@fold_and3/2@ is redundant: it has
\texttt{(in,out)} and the implied mode \texttt{(in,in)}.  However, even
though \texttt{(in,in)} is weaker in some sense, and its corresponding
mode interpretation is strictly higher in the information ordering, it
is not a $\sqsupseteq^\Quad$-model.  Calling \verb@fold_and3/2@ in mode
\texttt{(in,in)} requires a recursive call in mode \texttt{(in,out)}.
However, for \verb@fold_and3a/2@, which computes the same thing, the
code is well-moded with respect to each of the mode interpretations, as
expressed by the ``also''.
The mode \texttt{(in,in)} does not rely on mode \texttt{(in,out)} and
considerably better efficiency can be achieved,
because it can be statically determined (by the Mercury compiler, for
example) that no choice points are needed.

Precise analysis of declared types, modes, determinism, and so on,
is useful for uncovering program errors statically and increasing 
efficiency of implementations.  
Such analysis distinguishes computations which
(might) succeed from those which (must) fail.  Most proposals also support
methods to restrict the ways in which predicates should be used, 
for example, the input arguments should be well typed.  
The more advanced proposals also support forms of abnormal termination,
such as \texttt{error/1}.
The four-valued domain we use for the semantics of logic programs 
seems particularly well suited to this kind of analysis.
In particular, we have demonstrated how type and mode declarations can
be used to define four-valued interpretations and how
$\sqsupseteq^\Quad$-models are an important device for checking
correctness of these declarations.

\section{Formal Specifications}
\label{sec-specification}

\begin{figure}
$\forall s \forall s' ~subset(s, s') \leftrightarrow \forall e~ (member(e, s) \rightarrow
member(e, s'))$

\begin{verbatim}
              subset([], _).
              subset([E|SS], S) :- member(E, S), subset(SS, S).

              member(E, [E|_]).
              member(E, [_|S]) :- member(E, S).

              list([]).
              list([_|S]) :- list(S).
\end{verbatim}
\caption{First-order logic specification and Prolog definition of
\texttt{subset/2} \label{fig-subset}}
\end{figure}

In the early days of logic programming there was considerable interest
in the relationship between specifications (particularly formal
specifications written in classical first order logic) and logic programs
\cite{ClarkSickel,Hog81,SatoTamaki,Kow85}.
This work generally overlooked what we here have called inadmissibility.  
For example, Figure \ref{fig-subset}
shows a specification and Prolog implementation of the \texttt{subset/2}
predicate given by \citeN{Kow85}, where sets are represented as lists and
$member$ is the Prolog list membership predicate.  \citeN{Kow85} shows
that the implementation is a logical consequence of the specification.
That is to say, the program $P$ which defines \verb!subset!
is sound with respect to the specification $S$:
for all queries $Q$, if $P \models Q$ then $S \models Q$.
However, \texttt{subset(true,42)} is true according to the specification,
which is counter-intuitive, to say the least.  If the specification
is modified to restrict both arguments to be lists, the program is no
longer a logical consequence (the program has \verb@subset([],42)@ as
a consequence but \verb@subset([],42)@ is no longer a consequence of
the specification).  
When negation is also considered, or even the fact that logic programs 
implicitly define falsehood of some atoms, early approaches relating
formal specifications and logic programs based on classical logic seem 
unworkable.  

With our approach it is natural to identify specifications with
four-valued interpretations.  Our ``intended interpretations'' are
essentially specifications, albeit informal ones which exist only in the
mind of programmers.  However, we can also design formal specification
languages where the meaning of a specification is a single four-valued
interpretation.  We propose such a language now.  Although we can never be
sure that a formal specification accurately captures our intentions
(as Kowalski's specification above shows), and fully automated
verification is bound to be intractable in general,
cross-checking between a specification and code can give us
additional confidence in
the correctness of our code.  In the design of our specification language
we aim to utilise classical logic as far as possible, while allowing
the flexibility of all four values.  Underspecification is supported by
declaring preconditions as well as postconditions.

\begin{definition}[Specification]
A specification is a well formed formula (wff) $\Delta$, a set of distinct
atoms $A_i$ in most general form, a precondition wff $\alpha_i$ for each
$A_i$ and a postcondition wff $\omega_i$ for each $A_i$.
\end{definition}

For example, a precondition and postcondition of \texttt{subset/2}
could be defined using syntax exemplified below 
(which could be supported by
just declaring the three keywords as operators in NU-Prolog or Mercury).
In addition, $\Delta$ would define the predicates \texttt{member/2} and
\texttt{list/1} using a syntax close to traditional first order logic,
or Prolog syntax could be used as shorthand for the Clark completion,
for example.

\begin{verbatim}
     predicate subset(SS, S)
     precondition list(S), list(SS)
     postcondition all [E] (member(E, SS) => member(E, S)).
\end{verbatim}

\begin{definition}[Meaning of a specification]
The meaning of a specification is a four-valued interpretation for the
$A_i$ predicates, such that each ground atom $A_i \theta$ is
\begin{itemize}
\item \textbf{i}, if precondition $\alpha_i \theta$ is
false in any classical model of $\Delta$, otherwise
\item \textbf{t}, if postcondition $\omega_i \theta$ is
true in all classical models of $\Delta$,
\item \textbf{f}, if postcondition $\omega_i \theta$ is
false in all classical models of $\Delta$, and
\item \textbf{u} otherwise.
\end{itemize}	
\end{definition}
With this, a \texttt{subset/2} atom which has some non-list argument
is given the value \textbf{i}, and no \texttt{subset/2}
atom gets the value \textbf{u}. 
The other \verb!subset! atoms are partitioned into \textbf{t} and
\textbf{f} in the intuitive way.  
Thus the counter-intuitive consequences
of Kowalski's specification are mapped to
\textbf{i} rather than \textbf{t}.

Kowalski and others attempted to relate the meanings of formal
specifications and programs via the truth ordering.  In our approach
we relate them via the information ordering.  A program is correct
with respect to a specification if and only if the meaning of the
specification is greater than or equal to ($\sqsupseteq^\Quad$)
the least $\sqsupseteq^\Quad$-model of the program.  The meaning of
the specification being a $\sqsupseteq^\Quad$-model of the program
is a sufficient condition for this and Theorem\ \ref{thm-sound-ge}
gives the partial correctness results.  For example, the meaning of
the \texttt{subset/2} specification is a $\sqsupseteq^\Quad$-model of
the program.  There can be different logic programs, with different
behaviours, which are correct according to a specification---they can
be seen as refinements of the specification.
If the specification is not a $\sqsupseteq^\Quad$-model of the program,
the program may succeed or finitely fail
in ways which are inconsistent with the specification (wrong answers or
missing answers).

As we develop an implementation from an initial high level specification,
we generally move lower in the information order.  For example,
we may find it useful to strengthen the specification above so
\texttt{subset/2} can be used in more flexible ways in our system.
Moving the \texttt{list(SS)} constraint from the precondition to the
postcondition is like changing the mode declaration for \texttt{subset/2}
from \texttt{(in,in)} to \texttt{(out,in)}.  The meaning of the
stronger specification is lower in the order, with some previously
\textbf{i} atoms such as \verb@subset(abc,[])@ now being \textbf{f}.  As we
proceed from specification to code we go lower still: \verb@subset([],42)@
is \textbf{i} in both specifications but \textbf{t} in the least
$\sqsupseteq^\Quad$-model of the program.  We believe our approach of
having a complete lattice in the information order can provide a simple,
elegant and accurate view of the relationship between specifications
and programs.

Our proposed specification language is inspired in part by the VDM-SL
specification language, which has preconditions but is based on the
functional programming paradigm (functions are specified rather than
predicates).  The underlying theory is the logic of partial functions, 
LPF \cite{barringer-cheng-cbjones:1984,jones_middelburg}, 
with three truth values.  Although specifications
have preconditions, the primary use of the non-classical truth value
is to represent undefined, or \textbf{u}---the meaning of a recursively
defined function is given by the least fixed point of the definition.
There is no separate truth value to represent unspecified, or \textbf{i}.  
Once again we contend that these semantically distinct notions are 
best represented by distinct truth values.

\section{Declarative debugging}
\label{sec-debugging}

The semantics of Naish~\citeyear{sem3neg} is closely aligned with
declarative debugging (introduced in \citeN{Sha83}) and the term
``inadmissible'' comes from this area \cite{Per86}.  The Naish semantics
gives a formal basis for the three-valued approach to declarative
debugging of \citeN{ddscheme3} (using \textbf{t}, \textbf{f}, and
\textbf{i}) as applied to Prolog.  Given a goal whose behaviour
is inconsistent with the intended three-valued interpretation (it
has a wrong or missing answer), the debugger identifies some part
of the code (such as a clause instance) which demonstrates that the
intended interpretation is not a $\sqsupseteq^\Tri$-model.  As we have
demonstrated in Section~\ref{sec-generalised}, four values allow us
to express programmer intentions more precisely than three.  In this
section we sketch how four-valued interpretations can be supported by
declarative debuggers.

The declarative debugging scheme represents the computation as a tree;
sub-trees represent sub-computations.  
Each node is classified by an oracle as correct, erroneous
or inadmissible.  The debugger searches the tree for a \emph{buggy}
node, which is an erroneous node with no erroneous children.
If all children are correct it is called an \emph{e-bug}, otherwise
(it has an inadmissible child) it is called an \emph{i-bug}.
Every finite tree
with an erroneous root contains at least one buggy node and finding such
a node is the job of a declarative debugger.
To diagnose wrong answers in Prolog a proof tree \cite{Llo84} is used to
represent the computation.  Nodes containing \textbf{t}, \textbf{f} and
\textbf{i} atoms are correct, erroneous and inadmissible, respectively.
To diagnose computations that \emph{miss} answers, a different form
of tree is used, and nodes containing finitely failed \textbf{t},
\textbf{f} and \textbf{i} atoms are erroneous, correct, and inadmissible,
respectively.  \citeN{ddscheme3} also spells out how to deal with some
additional complexities which arise, such as non-ground wrong answers
and computations which return some but not all correct answers; we skip
the details here.  Buggy nodes correspond to instances of definitions of
the form
\textbf{t}\texttt{:-}\textbf{f},
\textbf{f}\texttt{:-}\textbf{t},
\textbf{t}\texttt{:-}\textbf{i} or
\textbf{f}\texttt{:-}\textbf{i}.
The first two are e-bugs (the kind diagnosed by more conventional
two-valued declarative debuggers); the last two are i-bugs.

Four-valued interpretations can be used in place of three-valued
interpretations in this scheme, as follows.  The debugging algorithm remains
unchanged; only the way the oracle classifies nodes is modified.
For wrong answer diagnosis, \textbf{u} is treated the same
as \textbf{f}---a sub-computation which succeeds contrary to our
intentions is erroneous.  For missing answer diagnosis \textbf{u} is
treated the same as \textbf{t}---a sub-computation which finitely
fails contrary to our intentions is also erroneous.  This simple
generalisation of the three-valued scheme allows us to use four-valued
interpretations and
find bugs corresponding to instances of definitions where the head is
\textbf{u} but the body is not.

For example, an atom such as \texttt{interpret("main:-main.")} may be
considered admissible, since its argument is well-formed.  However,
it is not intended to terminate and if it succeeds (or finitely fails)
we would like a tool to help debug it.  With four values, we can say
this atom is \textbf{u} and if the atom appears in the node of a proof
tree, the node would therefore be considered erroneous and amenable to
declarative debugging.  The intended interpretation is not a
$\sqsupseteq^\Quad$-model and the debugger is able to diagnose why.

Intuition may suggest the debugger would need four classes of
nodes for the four truth values.  However, the classes of nodes do
not all correspond to truth values in the intended interpretation.
They correspond to the \emph{comparison} between the truth value in the
intended interpretation and the observed behaviour (or the truth value
in the least model of the program).  Note that the observed behavior is
two-valued in these uses of declarative debugging---the computation must
succeed or finitely fail.  Inadmissible nodes correspond to a
comparison using $\sqsupset$
(which only holds when the intended value is \textbf{i}).  Correct nodes
correspond to $=$.  Erroneous nodes correspond to incomparability for
three-valued interpretations but they may also correspond to $\sqsubset$
in the four-valued case.  Thus four-valued interpretations add some
flexibility to declarative debuggers with very little additional cost.

\section{Computation and the information ordering}
\label{sec-computation-information}

The logic programming paradigm introduced the view of computation as
deduction \cite{Kow80}.  Classical logic was used and hence computation
was identified with the truth ordering.  With Prolog programs viewed
as Horn clauses, \texttt{:-} is classical $\leftarrow$, or $\ge$ in
the truth ordering.  We view the Prolog arrow as $\sqsupseteq^\Quad$,
which naturally leads to identifying computation with the information
ordering rather than the truth ordering.  In this section we sketch this
alternative view of the logic programming paradigm.  The information
ordering ordering holds if we compare
successive states of a computation using a correct program (that is, the
intended interpretation is a $\sqsupseteq^\Quad$-model).  Because $H
\sqsupseteq B$ for each head grounding, replacing a subgoal by the body
of its definition (a basic step in a logic programming computation)
gives us a new goal which is lower (or equal) in the information
ordering (see Proposition~\ref{prop-comp-order}).

This view is obscured if we view Prolog computation as SLD derivations
because SLD derivations include the ``success continuation'' of the
current sub-goal but not the ``failure continuation''---the alternatives
which would be explored on backtracking.  We view a computation state
as a disjunction of (conjunctive) goals.  This is equivalent to a
\emph{frontier} of nodes in an SLD tree rather than a single node (or a
single goal in an SLD derivation).  Free variables are those appearing
in the top-level goal; other variables are existentially quantified.
A computation step selects a node from the frontier (a disjunct), then
selects a subgoal within it (a conjunct).  For simplicity, we do not
deal with negation here.  A more detailed model of logic programming
computation in this style would also include propagation of failure from
unsatisfiable equations.

\begin{definition}[Computation state, successsor state]
A computation state $S$ is a formula of the form
$\exists {V} (D_1 \vee \ldots \vee D_m)$, with each $D_i$
a conjunction of literals
$C_{i,1} \wedge \ldots \wedge C_{i,m^i}$.
Let $(C_{i,j}, \exists {W}~ (B_1 \vee \ldots \vee B_n))$ be a
head instance of a definition, with variables in ${W}$
renamed so they are distinct from those in $S$.
Let $D'$ be
$(C_{i,1} \wedge \ldots C_{i,j-1} \wedge B_1 \wedge C_{i,j+1} \ldots
\wedge C_{i,m^i}) \vee \ldots \vee
(C_{i,1} \wedge \ldots C_{i,j-1} \wedge B_n \wedge C_{i,j+1} \ldots
\wedge C_{i,m^i})$.  Then
$S' = \exists {V} \exists {W} (D_1 \vee \ldots D_{i-1} \vee D' \vee
D_{i+1} \ldots \vee D_m)$ is a successor state of $S$.
\end{definition}

Given a top-level Prolog goal, the intended interpretation gives a
truth assignment for each ground instance.  Subsequent resolvents can
also be given a truth assignment for each ground instance of the
variables in the top level goal (with local variables considered
existentially quantified).  As the computation progresses, the
truth value assignment for each instance often remains the same,
but can become lower in the information ordering.

\pagebreak
\begin{proposition}
\label{prop-comp-order}
If $S'$ is the successor state of $S$, $\theta$ is a grounding
substitution for just the free variables in $S$ (and $S'$) and
interpretation $M$ is a $\sqsupseteq^\Quad$-model of the program,
then $M(S \theta) \sqsupseteq^\Quad M(S' \theta)$.
\end{proposition}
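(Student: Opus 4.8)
The plan is to evaluate $M(S\theta)$ and $M(S'\theta)$ by pushing the ``made true''/``made false'' clauses down to ground instances, where the only difference between the two states is that the selected atom has been unfolded, and then to propagate the model inequality through the connectives. First I would record three facts, each an immediate consequence of the definition of ``made true'': for ground formulas $M(F_1 \wedge F_2) = M(F_1) \wedge M(F_2)$ and $M(F_1 \vee F_2) = M(F_1) \vee M(F_2)$ (meet and join in the truth order), and $M(\exists W[D]) = \bigvee_{\delta} M(D\delta)$ with $\delta$ ranging over groundings of $W$ --- that is, an existential closure of a disjunction behaves like a disjunction over its ground instances, and, since $\Quad$ is finite, this is really a finite join. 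Iterating these, $M(S\theta) = \bigvee_{l,\gamma} M(D_l\theta\gamma)$, the join over all $l$ and all groundings $\gamma$ of $V$. For $S'\theta$, write $D' = E_1 \vee \cdots \vee E_n$, where $E_k$ is $D_i$ with the selected conjunct $C_{i,j}$ replaced by the body disjunct $B_k$; since $W$ is renamed apart from $S$ it occurs only inside the $B_k$, so the $\exists W$ can be distributed away and $M(S'\theta) = \bigvee_{l \neq i,\gamma} M(D_l\theta\gamma) \;\vee\; \bigvee_{k,\gamma,\delta} M(E_k\theta\gamma\delta)$, with $\delta$ ranging over groundings of $W$.

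The summand $\bigvee_{l \neq i,\gamma} M(D_l\theta\gamma)$ is common to both expressions, so by monotonicity of $\vee$ with respect to $\sqsubseteq$ (the bilattice \Quad\ is interlaced) it suffices to show $\bigvee_{k,\gamma,\delta} M(E_k\theta\gamma\delta) \sqsubseteq \bigvee_{\gamma} M(D_i\theta\gamma)$. Fix $\gamma$ and put $Q_\gamma = \bigwedge_{l \neq j} M(C_{i,l}\theta\gamma)$, the contribution of the untouched conjuncts (none of which contains $W$, so $Q_\gamma$ does not depend on $\delta$); then $M(D_i\theta\gamma) = Q_\gamma \wedge M(C_{i,j}\theta\gamma)$ and $M(E_k\theta\gamma\delta) = Q_\gamma \wedge M(B_k\theta\gamma\delta)$. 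Here the \emph{distributivity} of \Quad\ is essential --- the inequality cannot be argued disjunct by disjunct --- because $\wedge$ distributes over the (finite) join, giving $\bigvee_{k,\delta} M(E_k\theta\gamma\delta) = Q_\gamma \wedge \bigvee_{k,\delta} M(B_k\theta\gamma\delta) = Q_\gamma \wedge M\bigl((\exists W[B_1 \vee \cdots \vee B_n])\theta\gamma\bigr)$.

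Now the model hypothesis enters: $(C_{i,j}, \exists W[B_1 \vee \cdots \vee B_n])$ is a head instance of a definition of $P$, and $\theta\gamma$ grounds its head without touching the fresh local variables $W$, so $(C_{i,j}\theta\gamma,\; (\exists W[B_1 \vee \cdots \vee B_n])\theta\gamma)$ is a head grounding of that definition. Since $M$ is a $\sqsupseteq^\Quad$-model, the definition of $\sqsupseteq^\Quad$-model (cf.\ Proposition~\ref{prop-fourval-sqsubset}) gives $M\bigl((\exists W[B_1 \vee \cdots \vee B_n])\theta\gamma\bigr) \sqsubseteq M(C_{i,j}\theta\gamma)$. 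Monotonicity of $\wedge$ in $\sqsubseteq$ then yields $\bigvee_{k,\delta} M(E_k\theta\gamma\delta) \sqsubseteq Q_\gamma \wedge M(C_{i,j}\theta\gamma) = M(D_i\theta\gamma)$, and taking the ($\sqsubseteq$-monotone) join over all $\gamma$ gives $M(S'\theta) \sqsubseteq M(S\theta)$, i.e.\ $M(S\theta) \sqsupseteq^\Quad M(S'\theta)$.

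The hard part will be the bookkeeping around $\exists W$: one must check that $W$ genuinely occurs only inside the replaced disjunct (so it factors cleanly out of the big join), that after grounding the substituted body is \emph{literally} a head grounding of a program clause (so the model property applies verbatim), and --- the one non-routine point --- that the $\sqsubseteq$-inequality survives only \emph{after} joining over all body disjuncts and all groundings of $W$. The per-disjunct claim $M(E_k\theta\gamma\delta) \sqsubseteq M(D_i\theta\gamma)$ is false in general (for instance when $Q_\gamma = \mathbf{u}$, $M(B_k\theta\gamma\delta) = \mathbf{i}$ and $M(C_{i,j}\theta\gamma) = \mathbf{t}$), which is exactly why reading $\exists$ as a join and exploiting distributivity of \Quad\ --- rather than mere monotonicity --- are what make the argument go through.
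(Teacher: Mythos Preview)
Your proof is correct and follows the same strategy as the paper's: rewrite $D'$ by pushing $\exists W$ inward so that the only difference from $D_i$ is the single conjunct $\exists W[B_1 \vee \cdots \vee B_n]$ in place of $C_{i,j}$, apply the $\sqsupseteq^{\Quad}$-model property at that position, and then propagate the inequality outward by monotonicity of $\wedge$ and $\vee$. Your version is more explicit about grounding the $\exists V$ variables and about why distributivity (which the paper invokes under the label ``De Morgan's laws'') rather than mere monotonicity is needed for the factoring step, but the argument is essentially the same.
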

\begin{proof}
Since variables in ${W}$ are not in $S$, $S' \theta =
(\exists {V} (D_1 \vee \ldots D_{i-1} \vee (\exists {W}~D')  \vee
D_{i+1} \ldots \vee D_m) \theta$.
Since variables in ${W}$ are not in $C_{i,k}$ for $k \ne j$
and De Morgan's laws hold for \Quad, $D' \theta = 
(C_{i,1} \wedge \ldots C_{i,j-1} \wedge
\exists {W}~(B_1 \vee \ldots \vee B_n) \wedge C_{i,j+1} \ldots
\wedge C_{i,m^i}) \theta$.
Since $M$ is a $\sqsupseteq^\Quad$-model, $ M(C_{i,j} \theta) \sqsupseteq
M(\exists {W}~ (B_1 \vee \ldots \vee B_n) \theta)$.
The result follows from the monotonicity of $\wedge$ and $\vee$.
\end{proof}
For example, consider the goal \texttt{implies(X,f)}
(where \verb!implies! was defined in Section~\ref{sec-scene}). 
Our intended interpretation maps
\texttt{implies(f,f)} to $\mathbf{t}$ and \texttt{implies(t,f)} to
$\mathbf{f}$, but may map \texttt{implies(42,f)} to $\mathbf{i}$, if the
first argument is expected to be input.  After one step of the computation
we have the conjunction \texttt{neg(X,U), or(U,f,t)}
(where \texttt{U} is local to the computation and hence existentially
quantified).
If our intended interpretation allows any mode for \texttt{neg}, the
instance where \texttt{X} = 42 is then mapped to $\mathbf{f}$.

We believe that having a complete lattice using the information
ordering provides an important and fundamental insight into the nature
of computation.  At the top of the lattice we have an element which
corresponds to underspecification in the mind of a person.  At the
bottom of the lattice we have an element which corresponds to a the
inability of a machine or formal system to compute or define a value.
The transitions between the meanings we attach to specifications and
correct programs, and successive execution states of a correct program,
follow the information ordering, rather than the truth ordering.

\section{Related work}
\label{sec-related}

\citeN{Den_Bru_Mar:TOCL01} discuss
the use of inductive definition in mathematical logic.
They develop a general theory of induction over non-monotone
operators, and at the same time provide strong justification for
the well-founded semantics \cite{VanGelder91,Fitting:JLP93}
for logic programs with negation.
The view of \citeN{Den_Bru_Mar:TOCL01} is that recursive logic
programs represent inductive definitions --- the view to which we,
with this paper, also subscribe.
\citeN{Den_Bru_Mar:TOCL01} is not concerned with intended
semantics and specification, but the authors still make essential 
use of four-valued (as opposed to three-valued) logic, 
albeit primarily for reasons of technical convenience.

Arieli~\cite{Arieli:AMAI02} similarly gives a fixed point
characterisation of the meaning of logic programs.
One aim is to provide a language that supports knowledge revision
and reasoning with uncertainty.
Arieli's logic programming language has two kinds of negation,
namely explicit negation ($\neg$) and negation-by-failure
(\texttt{not}).
The proposed semantics allows for paraconsistency, that is, the
handling of locally inconsistent information in a way that does
not lead to the entire program being considered inconsistent.
That context naturally leads to the use of Belnap's logic.

\citeN{Loyer:TOCL04} are similarly concerned with an extended language.
In this case, the language is that of ``Fitting programs'', the kind
of logic programs used by \citeN{fitting:JLP_1991}, with the usual
connectives ``duplicated'' for the bilattice $\Quad$.
\citeN{Loyer:TOCL04} extend Fitting's work on reasoning in a
distributed (or multi-agent) setting.
The semantic framework they propose separates ``hypotheses'' from
``facts'' and is broad enough that, when restricted to Datalog
programs, it generalises both Fitting's ``Kripke-Kleene'' semantics
\cite{Fitting85} and the well-founded semantics \cite{VanGelder91}.
The framework, which again is based on four-valued logic, provides
what can be seen as a well-founded semantics for Fitting programs.

Many-valued logics have also long been advocated outside 
the logic programming community, but
the take-up there has arguably been more limited. 
In Section~\ref{sec-specification} we briefly mentioned the aims and
ideas of the Vienna Development Method (VDM).
This school has long argued that since programs, functions, and 
procedures that are written in a Turing complete language may be partial,
some sort of ``logic for partial functions'' is needed,
and that such a logic necessarily is three-valued.
As an extension,
Arieli and Avron
\citeyear{arieli-avron:1996,arieli-avron:LICS1998}
have argued the case for four-valued logic.

Starting with \citeN{McCarthy:FormalBasis63},
many have argued in favour of many-valued logics in which connectives
such as $\land$ and $\lor$ are no longer commutative.
For example, in McCarthy's logic, $\textbf{t} \lor \textbf{u}$ is 
equivalent to $\textbf{t}$, but
$\textbf{u} \lor \textbf{t}$ is equivalent to $\textbf{u}$ 
(whereas in $K_3$ it is $\textbf{t}$ as well).
The lack of commutativity makes these connectives implementable in 
a \emph{sequential} programming language,
and it corresponds closely to how the connectives are defined in
most modern programming languages.
As an example of the use of many-valued logic with non-commutative
conjunction, \citeN{barbuti:SCP1998} give a pure-Prolog semantics 
which is designed to closely mirror Prolog's depth-first left-to-right
evaluation strategy.
The logic has four truth values, and the roles of \textbf{u},
\textbf{f}, and \textbf{t} are conventional.
However, the fourth value, denoted $\textbf{t}_u$, is very different
to \textbf{i}.
Its operational-semantics reading is that it stands for
divergence preceded by success.
Another example of the use of non-commutative conjunction is
the three-valued logic proposed by \citeN{avron-konikowska:2009}
which combines $K_3$ (for reasoning about parallel constructs)
with McCarthy's logic (for reasoning about sequential constructs).

\citeN{morris-bunkenburg:1998} are concerned with 
\emph{program refinement}
in the presence of partiality and non-determinism 
(in program statements and/or in specifications).
They present a four-valued calculus over a language which includes
a (non-monotone) ``defined'' predicate, a device also used in LPF.

\citeN{Chechik-TOSEM:2003} use Belnap's 4-valued logic for the
analysis of so-called mixed transition systems~\cite{Dams-Toplas:1997}.
Transitions in mixed transition systems carry a modality (may or must)
with no assumption that a ``must'' transition also necessarily is a
``may'' transition.
As a consequence, it is possible for a property to both hold and not
hold.

\citeN{nishimura:2009} uses the simple 4-valued bilattice
$\Quad$ in a variant of refinement calculus.
Predicate transformers are developed for a small language of
program statements, including an exception catching primitive,
for use with abnormal program behaviour (division by zero, say)
as well as with explicit programmer-raised exceptions.
Nishimura's use of $\Quad$, however,
does not reflect a need to capture
varying degrees of information content.
Rather, four-valued logic is used to provide an elegant encoding trick.
In Nishimura's setting, $wp(S,\varphi_n, \varphi_e)$ expresses the
weakest condition which, when it holds before program statement $S$,
will ensure that, either, $S$ terminates normally, making $\varphi_n$ true,
or else $S$ terminates abnormally, making $\varphi_e$ true.
The four-valued lattice provides a convenient way of representing the
four possible states of the pair $\tuple{\varphi_n, \varphi_e}$.

The use of many-valued logic for reasoning about programs has also
had its detractors who argue that abandoning classical logic
complicates things, for insufficient gain.
\citeN{gries-schneider:1995} are concerned that
three-valued logics abandon the law of the excluded middle,
so that the schema $\varphi \lor \neg \varphi$ no longer is valid.
They point out that, if $\mathbf{u} \biim \mathbf{u}$ is valid
(and they insist that every instance of $\varphi \biim \varphi$
ought to be valid)
then the bi-implication connective $\biim$
fails to be associative, since we otherwise would have
\[
\textbf{f}~~ \equiv 
~~\textbf{t} \biim \textbf{f}~~ \equiv
~~(\textbf{u} \biim \textbf{u}) \biim \textbf{f}~~ \equiv
~~\textbf{u} \biim (\textbf{u} \biim \textbf{f})~~ \equiv 
~~\textbf{u} \biim \textbf{u}~~ \equiv 
~~\textbf{t}
\]
that is, we would have inconsistency.
They conclude that three-valued logic is too complicated to use
and favour staying instead with 2-valued logic by somehow side-stepping
non-denoting terms.
Problematic terms should be carefully prefixed to avoid non-denotation. 
For example,
``$y/y = 1$'' should systematically be replaced by
``$y \not= 0 \impl y/y = 1$''.

To us it seems that \citeN{gries-schneider:1995} ask for too much.
It is only to be expected that the law of the excluded middle will
be lost once we allow non-denoting terms in statements. 
And in the context of non-denoting terms, taking 
$\mathbf{u} \biim \mathbf{u}$ as valid would seem counter-intuitive.
It is a stretch to consider the statement 
$n/0 = 42~~ \biim ~~n/0 = 5$ valid,
given that $42 \not= 5$.
A far more natural approach is to consider that statement
ill-defined, that is, being neither true nor false.
As for the guarding of a formula $\varphi$ by conditions that ensure 
all terms in $\varphi$ are denoting,
that is hardly a practical solution when the terms involved stem 
from a Turing complete language---in place of ``$y/y = 1$'' 
consider being confronted with ``$f(y) = 1$'',
where $f$ has been given a (possibly complex) recursive definition.

\section{Conclusion}
\label{sec-conclusion}

Four-valued logic has previously been suggested as a tool for
reasoning about program behaviour in the context of
partiality, non-determinism and underspecification.
In a logic programming context, it has been used for parallel and
distributed programming, both as a language feature
\cite{fitting:JLP_1991} and an analysis tool \cite{Palmer97}.
In this paper we have argued that four-valued logic provides a handle
on many different situations that call for
reasoning about logic programs, 
even when we restrict attention to sequential programming.
The applications include
program analysis, 
type and mode systems,
formal specification, 
and declarative debugging.
Moreover, a semantics based on four truth values turns out
to be no more complex than one based on three.

Logicians have been aware of the limitations of formal systems since well
before the invention of electronic computers.  G{\"o}del showed the 
impossibility of a complete proof procedure for elementary number theory,
hence important gaps between truth and provability,
and in any Turing-complete programming
language there are programs which fail to terminate---undefinedness
is unavoidable.  Our awareness of the limitations of humans in
their interaction with computing systems goes back even further.
\citeN{babbage} claims to have been asked by members of the 
Parliament of the United Kingdom,
``Pray, Mr.\ Babbage, if you put into the machine wrong figures,
will the right answers come out''?  The term ``garbage in, garbage out''
was coined in the early days of electronic computing and concepts such as
``preconditions'' have always been important in formal verification of
software---underspecification is also unavoidable in practice.

Using a special value to denote undefinedness is the accepted practice
in programming language semantics.  Using a special value to denote
underspecification is less well established, but has been shown to
provide elegant and natural reasoning about partial correctness, at
least in the logic programming context.  In this paper we have proposed
a domain for reasoning about Prolog programs which has values to denote
both undefinedness and underspecification---they are the bottom and top
elements of a bilattice.  This gives an elegant picture which encompasses
both humans not making sense of some things and computers 
being unable to produce definitive results sometimes.  The logical connectives
Prolog uses in the body of clauses operate within the truth order in
the bilattice.  However, the overall view of computation does not
operate in the truth order, it operates
in the orthogonal ``information'' order.

\bibliographystyle{acmtrans}

\end{document}